\theoremstyle{plain}
\newtheorem{theorem}{Theorem}
\newtheorem{lemma}{Lemma}
\theoremstyle{definition}
\newtheorem{definition}{Definition}
\theoremstyle{example}
\theoremstyle{remark}
\numberwithin{equation}{section}
\begin{document}

%%%
%%%
%%%%%%%%%%%%%%%%%%%%%%%%%%%%%%%%%%%%%%%%%%%%%%%%%%%%%%%%%%%%%%%%%%%%%%%%%%
%%
%%%
\title[Asymptotic Enumeration of RNA Structures with Pseudoknots]
      {Asymptotic Enumeration of RNA Structures with Pseudoknots}
\author{Emma Y. Jin and Christian M. Reidys$^{\,\star}$}
\address{Center for Combinatorics, LPMC-TJKLC \\
         Nankai University  \\
         Tianjin 300071\\
         P.R.~China\\
         Phone: *86-22-2350-6800\\
         Fax:   *86-22-2350-9272}
\email{reidys@nankai.edu.cn}
\thanks{}
\keywords{Asymptotic enumeration, RNA secondary structure, $k$-noncrossing
RNA structure, pseudoknot, generating function, transfer theorem,
Hankel contour, singular expansion}
\date{June 2007}
\begin{abstract}
In this paper we present the asymptotic enumeration of RNA structures
with pseudoknots. We develop a general framework for the computation
of exponential growth rate and the sub exponential factors for
$k$-noncrossing RNA structures.
Our results are based on the generating function for the number of
$k$-noncrossing RNA pseudoknot structures, ${\sf S}_k(n)$, derived in
\cite{Reidys:07pseu}, where $k-1$ denotes the maximal size of sets of
mutually intersecting bonds.
We prove a functional equation for the generating function $\sum_{n\ge
0}{\sf S}_k(n)z^n$ and obtain for $k=2$ and $k=3$ the analytic
continuation and singular expansions, respectively.
It is implicit in our results that for arbitrary $k$
singular expansions exist and via transfer theorems of analytic
combinatorics we obtain asymptotic expression for the coefficients.
We explicitly derive the
asymptotic expressions for $2$- and $3$-noncrossing RNA structures.
Our main result is the derivation of the formula ${\sf S}_3(n)  \sim
\frac{10.4724\cdot 4!}{n(n-1)\dots(n-4)}
                    \left(\frac{5+\sqrt{21}}{2}\right)^n$.
\end{abstract}
\maketitle
{{\small
%\tableofcontents
}}

%%%
%%%
%%%%%%%%%%%%%%%%%%%%%%%%%%%%%%%%%%%%%%%%%%%%%%%%%%%%%%%%%%%%%%%%%%%%%%%%
%%%
%%%

\section{Introduction}

%%%
%%%%%%%%%%%%%%%%%%%%%%%%%%%%%%%%%%%%%%%%%%%%%%%%%%%%%%%%%%%%%%%%%%%%%%%%
%%%

RNA molecules are particularly fascinating since they represent
both: genotypic legislative via their primary sequence and
phenotypic executive via their functionality associated to 2D or
3D-structures, respectively. Accordingly, it is believed that RNA
may have been instrumental for early evolution--before Proteins
emerged. The primary sequence of an RNA molecule is the sequence of
nucleotides {\bf A}, {\bf G}, {\bf U} and {\bf C} together with the
Watson-Crick ({\bf A-U}, {\bf G-C}) and ({\bf U-G}) base pairing
rules specifying the pairs of nucleotides can potentially form
bonds. Single stranded RNA molecules form helical structures whose
bonds satisfy the above base pairing rules and which, in many cases,
determine their function. For instance, RNA ribozymes are capable of
catalytic activity, cleaving other RNA molecules. RNA secondary
structure prediction is of polynomial complexity \cite{Waterman:78a}
which is result from the fact that in secondary structures no two
bonds can cross. Leaving the paradigm of RNA secondary structures,
i.e.~studying RNA structures with crossing bonds, the RNA pseudoknot
structures, poses challenging problems for computational biology.
Prediction algorithms for RNA pseudoknot structures are much harder
to derive since there exists no {\it a priori} tree-structure and
the subadditivity of local solutions is not guaranteed. RNA
pseudoknot structures can be categorized in terms of the maximal
size of sets of mutually crossing bonds \cite{Reidys:07pseu}. To be
precise a $k$-noncrossing RNA structure has at most $k-1$ mutually
crossing bonds and a minimum bond-length of $2$, i.e.~for any $i$,
the nucleotides $i$ and $i+1$ cannot form a bond. The asymptotics of
$k$-noncrossing RNA structures is of central importance in this
context. The key question is how to decompose a $k$-noncrossing RNA
structure into a collection of sub-structures (which can easily be
computed), and what are the properties of this decomposition. Given
such a decomposition we can predict the factors and reassemble the
corresponding pseudoknot structure. A first step towards finding
such decompositions is to have information about the cardinalities
of the respective sets of structures involved.
%%%
%%%%%%%%%%%%%%%%%%%%%%%% Figures ex1 %%%%%%%%%%%%%%%%%%%%%%%%%%%%%%%%%%%%%%
%%%
\begin{figure}[ht]
\centerline{%
\epsfig{file=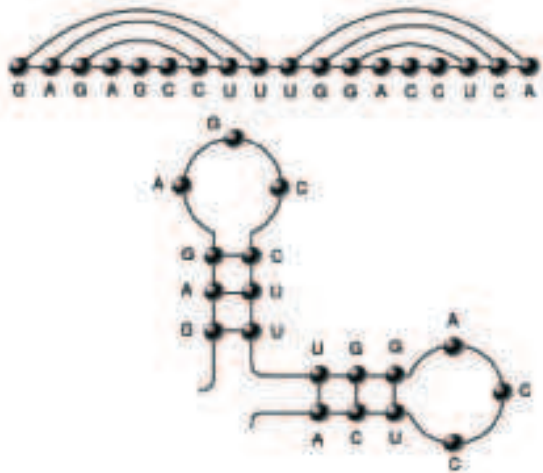,width=0.8\textwidth}\hskip15pt
 }
\caption{\small RNA secondary structures. Diagram representation
(top): the primary sequence, {\bf GAGAGCCUUUGGACCUCA}, is drawn
horizontally and its backbone bonds are ignored. All bonds are drawn
in the upper halfplane and secondary structures have the property
that no two arcs intersect and all arcs have minimum length $2$.
Outer planar graph representation (bottom). } \label{F:1}
\end{figure}
%%%
%%%%%%%%%%%%%%%%%%%%%%%%%%%%%%%%%%%%%%%%%%%%%%%%%%%%%%%%%%%%%%%%%%%%%%%%%
%%%
The asymptotic analysis of $k$-noncrossing RNA structures is based on
their generating function, obtained in \cite{Reidys:07pseu}. The particular
formulas are however alternating sums, which make even the computation of
the exponential growth rate a nontrivial task. In this paper we develop a
framework for the asymptotic enumeration of $k$-noncrossing RNA structures.
Before we go into this in more detail, let us first provide some background
on coarse grained RNA structures and put our results into context.

\subsection{RNA secondary structures or the universality of the square root}
About three decades ago Waterman {\it et.al.} pioneered the concept
of RNA secondary structures
\cite{Penner:93c,Waterman:79a,Waterman:78a,Waterman:94a,Waterman:80}.
The key property of secondary structures is best understood,
considering a structure as a diagram, which is obtained as follows:
one draws the primary sequence of nucleotides horizontally and
ignores all chemical bonds of its backbone. Then one draws all
bonds, i.e.~nucleotide interactions satisfying the Watson-Crick base
pairing rules (and {\bf G}-{\bf U} pairs) as arcs in the upper
halfplane, effectively identifying structure with the set of all
arcs. In this representation, RNA secondary structures have then
following property: there exist no two arcs $(i_1,j_1)$,
$(i_2,j_2)$, where $i_1<j_1$ and $i_2<j_2$ with the property
$i_1<i_2<j_1<j_2$ and all arcs have at least length $2$.
Equivalently, there exist no two arcs that cross in the diagram
representation of the structure, see Figure~\ref{F:1}.
%%%
%%%%%%%%%%%%%%%%%%%%%%%% Figures ex1 %%%%%%%%%%%%%%%%%%%%%%%%%%%%%%%%%%%%%%
%%%
\begin{figure}[ht]
\centerline{%
\epsfig{file=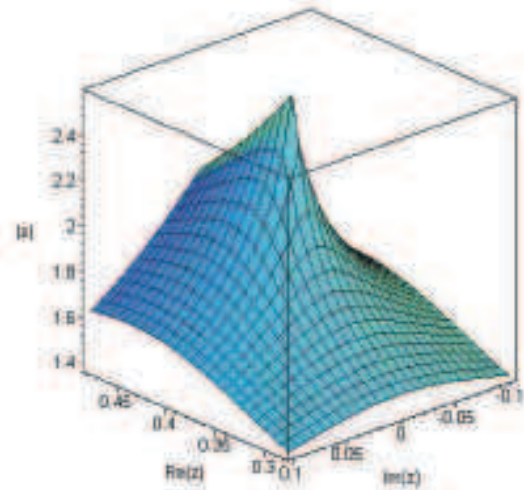,width=0.55\textwidth}\hskip15pt}
\caption{\small Universality of the square root. We display the
branch-point singularity (here at $\rho_2=\frac{3-\sqrt{5}}{2}$),
i.e.~the critical singularity for the asymptotics of RNA secondary structures.
{\it All} singularities arising from enumeration of certain classes of
secondary structures produces this type, whence the sub exponential factor
$n^{-\frac{3}{2}}$.
}
\label{F:1b}
\end{figure}
%%%
%%%%%%%%%%%%%%%%%%%%%%%%%%%%%%%%%%%%%%%%%%%%%%%%%%%%%%%%%%%%%%%%%%%%%%%%%
%%%
Basically, all combinatorial properties of secondary structures are derived
from Waterman's basic recursion \cite{Waterman:78a}
\begin{equation}\label{E:basic}
{\sf S}_2(n)= {\sf S}_2(n-1)+\sum_{s=0}^{n-2}
{\sf S}_2(n-2-s){\sf S}_2(s) \ ,
\end{equation}
where ${\sf S}_2(n)$ denotes the number of RNA secondary structures.
Eq.~(\ref{E:basic}) is an immediate consequence considering
secondary structures as Motzkin paths, i.e.~peak-free paths with
{\it up}, {\it down} and {\it horizontal} steps that stay in the
upper halfplane, starting at the origin and end on the $x$-axis. The
recursion is in particular the key for all asymptotic results since
it allows to obtain an implicit function equation for the generating
function and subsequent application of Darboux-type theorems
\cite{Schuster:98a,Wong:74}. If specific conditions are being
imposed, for instance minimum loop-size or stack length, it is
straightforward to translate these constraints into restricted
Motzkin paths, all of which satisfy some variant of
eq.~(\ref{E:basic}). As a result, all asymptotic formulae are of the
same type: a square root, that is, the asymptotic behavior is
determined by an algebraic branch singularity with the
sub exponential factor $n^{-\frac{3}{2}}$. For instance, the number
of RNA secondary structures having a minimum hairpin-loop length of
$3$ and minimum stack-length $2$ is asymptotically given by ${\sf
S}_2(n)\sim 1.4848 \, n^{-\frac{3}{2}} 1.8488^n$
\cite{Schuster:98a}. The number of RNA secondary structures having
exactly $\ell$ isolated vertices, ${\sf S}_2(n,\ell)$, satisfies the
two term recursion $(n-\ell)(n-\ell+2)\, {\sf S}_{2}(n,\ell)\,
-\, (n+\ell)(n+\ell-2)\,{\sf S}_{2}(n-2,\ell)=0$
\cite{Reidys:07pseu} and Waterman proved in \cite{Waterman:94a} the
following beautiful formula
\begin{equation}\label{E:Waterman-tree}
{\sf S}_2(n,\ell)  =
      \frac{2}{n-\ell}{\frac{n+\ell}{2} \choose \frac{n-\ell}{2} +1}
                    {\frac{n+\ell}{2}-1 \choose \frac{n-\ell}{2}-1}
\end{equation}
resulting from a bijection between secondary structure and linear trees.
In \cite{Waterman:86} it is shown that the prediction of secondary
structures can be obtained in polynomial time and yet again
eq~(\ref{E:basic}) is central for all folding algorithms
\cite{Zuker:79b,Schuster:98a,Waterman:86,Bauer:96,Tacker:94a,McCaskill:90a}.

\subsection{Beyond secondary structures}
While the concept of secondary structure is of fundamental
importance it is well-known that there exist additional types of
nucleotide interactions \cite{Science:05a}. These bonds are called
pseudoknots \cite{Westhof:92a} and occur in functional RNA (RNAseP
\cite{Loria:96a}), ribosomal RNA \cite{Konings:95a} and are
conserved in the catalytic core of group I introns.
%%%
%%%%%%%%%%%%%%%%%%%%%%%% Figures bisec %%%%%%%%%%%%%%%%%%%%%%%%%%%%%%%%%%%%%%
%%%
\begin{figure}[ht]
\centerline{%
\epsfig{file=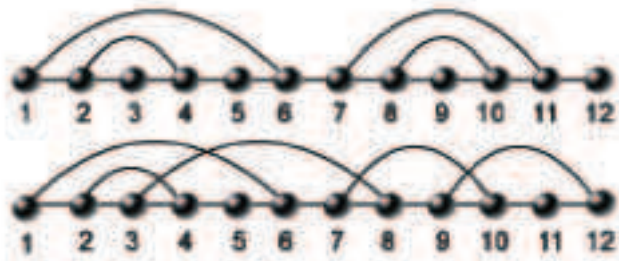,width=0.85\textwidth}
 }
\caption{\small Beyond secondary structures: an RNA bi-secondary structure
as the generalization from outer-planar to planar diagrams.
We display a secondary RNA structure (top) and a bi-secondary structure
(bottom). Reflecting the arcs $(3,8)$ and $(9,12)$ w.r.t. the $x$-axis
yields two secondary structures.}
\label{F:3}
\end{figure}
%%%
%%%%%%%%%%%%%%%%%%%%%%%%%%%%%%%%%%%%%%%%%%%%%%%%%%%%%%%%%%%%%%%%%%%%%%%%%
%%%
In plant viral RNAs pseudoknots mimic tRNA structure
and in {\it in vitro} RNA evolution \cite{Tuerk:92} experiments have produced
families of RNA structures with pseudoknot motifs, when binding HIV-1
reverse transcriptase. Important mechanisms like ribosomal frame shifting
\cite{Chamorro:91a} also involve pseudoknot interactions.
There exist several prediction algorithms for pseudoknot RNA structures
\cite{Rivas:99a,Uemura:99a,Akutsu:00a,Lyngso:00a} all of which can identify
particular respective pseudoknot motifs.
Stadler {\it et al.} \cite{Stadler:99a} suggested a classification of
RNA pseudoknot-types based on a notion of inconsistency graphs and
computed the upper bound of $4.7613$ for the exponential growth factor
of bi-secondary structures. Bi-secondary structures
are ``superpositions'' of two secondary structures, i.e.~they can be drawn
as a set of non intersecting arcs in the upper and lower half plane,
respectively.
Figure~\ref{F:3} shows how bi-secondary structures naturally
arise when passing from outer-planar to planar diagram representations.
The concept of $k$-noncrossing RNA structures generalize both: secondary
and bi-secondary structures, respectively.
While RNA secondary structures are precisely $2$-noncrossing RNA structures,
bi-secondary structures correspond to planar $3$-noncrossing RNA structures.
The key advantage of $k$-noncrossing RNA structures is that their defining
property is intrinsically local. It can be expected that this facilitates
fast folding algorithms. In Figure~\ref{F:4} we contrast all three structural
concepts, secondary, bi-secondary and $k$-noncrossing RNA structures.
%%%
%%%%%%%%%%%%%%%%%%%%%%%% Figures ex1 %%%%%%%%%%%%%%%%%%%%%%%%%%%%%%%%%%%%%%
%%%
\begin{figure}[ht]
\centerline{%
\epsfig{file=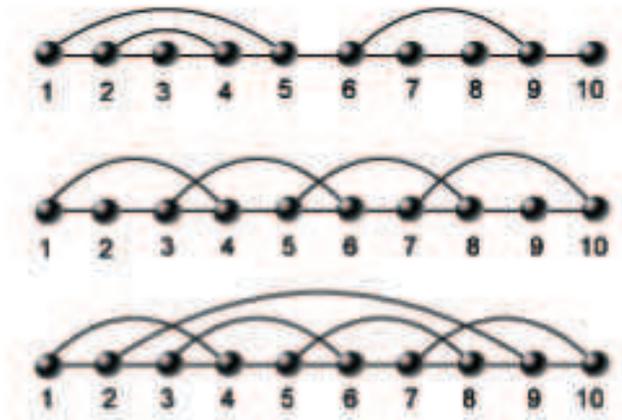,width=0.7\textwidth}\hskip15pt
 }
\caption{\small $k$-noncrossing RNA structures.
(a) secondary structure (with isolated labels $3,7,8,10$),
(b) bi-secondary structure, $2,9$ being isolated
(c) $3$-noncrossing structure, which is
    {\it not} a bi-secondary structure In fact, this is {\it the} smallest
    $3$-noncrossing RNA structure which is not a bi-secondary structure.
}
\label{F:4}
\end{figure}
%%%
%%%%%%%%%%%%%%%%%%%%%%%%%%%%%%%%%%%%%%%%%%%%%%%%%%%%%%%%%%%%%%%%%%%%%%%%%
%%%

\subsection{Organization and main results}
In Section~\ref{S:pre} provide the necessary background on
$k$-noncrossing RNA structures and the generating function
$\sum_{n\ge 0}{\sf S}_k(n)z^n$. In Section~\ref{S:exp} we derive the
exponential factor for $k$-noncrossing RNA structures, i.e.~we
compute the base at which $k$-noncrossing RNA structures
asymptotically grow. The exponential factor is {\it the} key result
for all complexity considerations arising in the context of
prediction algorithms for RNA pseudoknot structures. To make it
easily accessible to a broad readership we give an elementary proof
based on real analysis and transformations of the generating
function. Central to our proof is a functional identity
(Lemma~\ref{L:func}) whose true power is revealed only later in
Section~\ref{S:sub}, where it is put in the context of analytic
functions. Remarkably, Stadler's upper bound for bi-secondary
structures coincides with the exact exponential factor obtained via
Theorem~\ref{T:asy1} for $3$-noncrossing RNA structures up to
$O(10^{-2})$. In Section~\ref{S:sub} we compute the asymptotics for
$2$-noncrossing RNA structures and $3$-noncrossing RNA structures,
respectively. Since the method via implicit functions used for
secondary structures \cite{Schuster:98a} does not work for $k>2$ we
develop a new approach which is based on concepts developed by
Flajolet {\it et.al.} using singular expansions and transfer
theorems
\cite{Flajolet:05,Flajolet:99,Flajolet:94,Popken:53,Odlyzko:92}. The
basic strategy is as follows: we first obtain an analytic
continuation $f(z)$ generalizing the functional equation of
Lemma~\ref{L:func} to complex indeterminant $z$. For $k=3$ we obtain
an expression involving the Legendre polynomial
$P_{\frac{3}{2}}^{-1}(z)$ indicating that the type of singularity is
fundamentally different from the branch-point singularity of the
square root. In Figure~\ref{F:5} we display the analytic
continuation of $\sum_{n\ge 0}{\sf S}_3(n)z^n$ at the dominant
singularity, $\rho_3=\frac{5-\sqrt{21}}{2}$ and its singular
expansion.
%%%%
%%%%%%%%%%%%%%%%%%%%%%%%%%%%%%%%%%%%%%%%%%%%%%%%%%%%%%%%%%%%%%%%%%%%%%%%%%
%%%%
\begin{figure}[ht]
\centerline{%
\epsfig{file=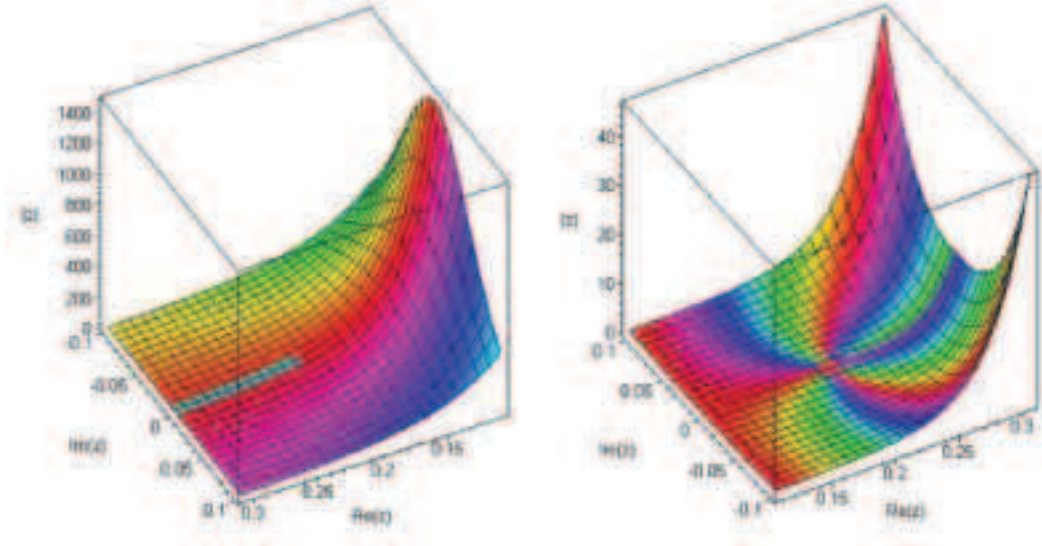,width=1.0\textwidth}\hskip15pt
 }
\caption{\small Toroidal harmonics and its singular expansion. We display
the analytic continuation of $\sum_{n\ge 0}{\sf S}_3(n)z^n$, the
generating function of $3$-noncrossing RNA structures (left) and its singular
expansion (right) at the dominant singularity
$\rho_3=\frac{5-\sqrt{21}}{2}$.}
\label{F:5}
\end{figure}
%%%%
%%%%%%%%%%%%%%%%%%%%%%%%%%%%%%%%%%%%%%%%%%%%%%%%%%%%%%%%%%%%%%%%%%%%%%%%%%
%%%%
We proceed by proving that $f(z)$ the dominant singularity is indeed unique.
The next step is to establish that there exists a singular
expansion for $f(z)$, i.e.~there exists a function $h$ such that
$f(z)=O(h(z))$ at the dominant singularity (see Section~\ref{S:sub}).
Intuitively, this singular expansion approximates $f(z)$ well enough to
retrieve precise asymptotic expansions of the coefficients via transfer
theorems \cite{Flajolet:05,Gao:92,Handbook:95}.
The existence of the singular expansion can be deduced from the particular
form of the generating function for $k$-noncrossing RNA structures.
Due to Lemma~\ref{L:ana} it suffices to analyze the coefficients
$f_3(2n,0)$, which are known via the determinant formula of Bessel
functions in eq.~(\ref{E:ww1}). We then proceed using this particular
form of $f_3(2n,0)$ to explicitly compute the singular expansion and
show in the process how the logarithmic term
arises naturally from elementary calculations.
It should be remarked that we use the transfer theorems since our
generating function is the composition of two analytic functions
$f(\vartheta(z))$. We then show that the type of the singualrity
of $f(\vartheta(z))$ coincides with the type of singularity of the
function $f(z)$. The phenomenon of the persistence of the singularity of
the ``outer'' function $f(z)$ is known as the {\it supercritical case}
\cite{Flajolet:05}. This will allow us to obtain the asymptotics of
the coefficients of the function $f(\vartheta(z))$.
One main result of the paper is the formula
\begin{eqnarray}
{\sf S}_3(n) & \sim & \frac{10.4724\, 4!}{n(n-1)\dots(n-4)}\,
\left(\frac{5+\sqrt{21}}{2}\right)^n \ .
\end{eqnarray}
In order to assess the quality of our formula, let us list
the sub exponential factors for $k=2$ and $k=3$, obtained from
Theorem~\ref{T:asy2} and
Theorem~\ref{T:asy3}:
\begin{eqnarray*}
{\sf s}_{2}^{}(n) & \sim & 1.1002\,
\left[\frac{1}{n^{\frac{3}{2}}}-\frac{7}
{8n^{\frac{5}{2}}}-\frac{111}{128n^{\frac{7}{2}}}+
\frac{893}{1024n^{\frac{9}{2}}}+{O}(n^{-\frac{11}{2}})\right] \\
{\sf s}_{3}^{}(n) & \sim & \frac{10.4724\cdot
4!}{n(n-1)\dots(n-4)}\, \sim 251.3375\left[\frac{1}{
n^5}-\frac{35}{4n^6}+\frac{1525}{32 n^7}+{O}(n^{-8})\right] \ .
\end{eqnarray*}
In the table below we list the sub exponential factors, i.e.~we compare
for $k=2,3$ the quantities ${\sf S}_{k}(n)/(\frac{3+\sqrt{5}}{2})^n$ and
${\sf s}_{k}^{}(n)$, respectively. ${\sf S}_2(n)$ and ${\sf S}_3(n)$ are
given by the generating function of $k$-noncrossing RNA structures.
\begin{center}
\begin{tabular}{c|c|c|c|c}
\hline
  \multicolumn{5}{c}{\textbf{The sub exponential factor}}\\
  \hline
$n$ & ${\sf S}_{2}(n)/(\frac{3+\sqrt{5}}{2})^n$ & ${\sf s}_{2}^{}(n)$
& ${\sf S}_{3}(n)/(\frac{5+\sqrt{21}}{2})^n$ & ${\sf s}_{3}^{}(n)$\\
\hline \small 10 & \small $2.796\times10^{-2}$ & \small $3.124\times10^{-2}$ & \small $5.229\times 10^{-4}$ & \small$1.512\times 10^{-3}$\\
\small 20  & \small $1.100\times 10^{-2}$ & \small $1.164\times10^{-2}$ & \small $3.358\times 10^{-5}$ & \small$5.354\times 10^{-5}$\\
\small 30  & \small $6.215\times 10^{-3}$ & \small $6.452\times10^{-3}$ & \small $5.776\times 10^{-6}$ & \small$7.874\times 10^{-6}$\\
\small 40  & \small $4.114\times 10^{-3}$ & \small $4.229\times10^{-3}$ & \small $1.576\times 10^{-6}$ & \small$1.991\times 10^{-6}$\\
\small 50  & \small $2.980\times 10^{-3}$ & \small $3.043\times10^{-3}$ & \small $5.627\times 10^{-7}$ & \small$6.789\times 10^{-7}$\\
\small 60  & \small $2.284\times 10^{-3}$ & \small $2.324\times10^{-3}$ & \small $2.397\times 10^{-7}$ & \small$2.804\times 10^{-7}$\\
\small 70  & \small $1.822\times 10^{-3}$ & \small $1.849\times10^{-3}$ & \small $1.156\times 10^{-7}$ & \small$1.323\times 10^{-7}$\\
\small 80  & \small $1.500\times 10^{-3}$ & \small $1.516\times10^{-3}$ & \small $6.123\times 10^{-8}$ & \small$6.888\times 10^{-8}$\\
\small 90  & \small $1.259\times 10^{-3}$ & \small $1.273\times10^{-3}$ & \small $3.483\times 10^{-8}$ & \small$3.868\times 10^{-8}$\\
\small 100  & \small $1.078\times 10^{-3}$ & \small $1.088\times10^{-3}$ & \small $2.098\times 10^{-8}$ & \small$2.305\times 10^{-8}$\\
\small 1000 & \small $3.484\times 10^{-5}$ & \small $3.475 \times 10^{-5}$ & \small $2.475\times 10^{-13}$ & \small $2.492\times 10^{-13}$ \\
\small 10000 & \small $1.104 \times 10^{-6}$ &\small $1.100\times 10^{-6}$ & \small $2.517 \times 10^{-18}$ & \small $2.516 \times 10^{-18}$
%\small 20000 & \small $3.904\times 10^{-7}$ &\small $3.904\times 10^{-7}$ & \small $7.874 \times 10^{-20}$ & \small $7.863 \times 10^{-20}$
\end{tabular}
\end{center}

%%%
%%%%%%%%%%%%%%%%%%%%%%%%%%%%%%%%%%%%%%%%%%%%%%%%%%%%%%%%%%%%%%%%%%%%%%%%
%%%

\section{$k$-noncrossing RNA structures}\label{S:pre}

%%%
%%%%%%%%%%%%%%%%%%%%%%%%%%%%%%%%%%%%%%%%%%%%%%%%%%%%%%%%%%%%%%%%%%%%%%%%
%%%
Suppose we are given the primary RNA sequence
$$
{\bf A}{\bf A}{\bf C}{\bf C}{\bf A}{\bf U}{\bf G}{\bf U}{\bf G}{\bf G}
{\bf U}{\bf A}{\bf C}{\bf U}{\bf U}{\bf G}{\bf A}{\bf U}{\bf G}{\bf G}
{\bf C}{\bf G}{\bf A}{\bf C}  \ .
$$
We then identify an RNA structure with the set of all bonds
different from the backbone-bonds of its primary sequence, i.e.~the
arcs $(i,i+1)$ for $1\le i\le n-1$. Accordingly an RNA structure is
a combinatorial graph over the labels of the nucleotides of the
primary sequence. These graphs can be represented in several ways.
In Figure~\ref{F:6} we represent a particular structure with
loop-loop interactions in two ways.
%%%%
%%%%%%%%%%%%%%%%%%%%%%%%%%%%%%%%%%%%%%%%%%%%%%%%%%%%%%%%%%%%%%%%%%%%%%%%%%
%%%%
\begin{figure}[ht]
\centerline{%
\epsfig{file=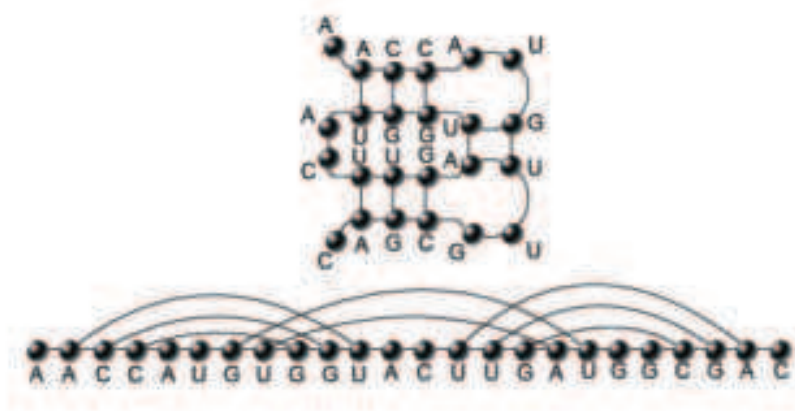,width=0.8\textwidth}\hskip15pt }
\caption{\small
A $3$-noncrossing RNA structure, as a planar graphs (top) and as a
diagram (bottom)} \label{F:6}
\end{figure}
%%%%
%%%%%%%%%%%%%%%%%%%%%%%%%%%%%%%%%%%%%%%%%%%%%%%%%%%%%%%%%%%%%%%%%%%%%%%%%%
%%%%
In the following we will consider structures as diagram
representations of digraphs. A digraph $D_n$ is a pair of sets
$V_{D_n},E_{D_n}$, where $V_{D_n}=
\{1,\dots,n\}$ and $E_{D_n}\subset \{(i,j)\mid 1\le i< j\le n\}$.
$V_{D_n}$ and $E_{D_n}$ are called vertex and arc set, respectively.
A $k$-noncrossing digraph is a digraph in which all
vertices have degree $\le 1$ and which does not contain a $k$-set of
arcs that are mutually intersecting, or more formally
\begin{eqnarray}
\ \not\exists\,
(i_{r_1},j_{r_1}),(i_{r_2},j_{r_2}),\dots,(i_{r_k},j_{r_k});\quad & &
i_{r_1}<i_{r_2}<\dots<i_{r_k}<j_{r_1}<j_{r_2}<\dots<j_{r_k} \ .
\end{eqnarray}
We will represent digraphs as a diagrams (Figure~\ref{F:6}) by
representing the vertices as integers on a line and connecting any
two adjacent vertices by an arc in the upper-half plane. The direction
of the arcs is implicit in the linear ordering of the vertices and
accordingly omitted.
%%%
%%%%%%%%%%%%%%%%%%%%%%%%%%%%%%%%%%%%%%%%%%%%%%%%%%%%%%%%%%%%%%%%%%%%%%
%%%
\begin{definition}\label{D:rna}
An $k$-noncrossing RNA structure is a digraph in which all vertices
have degree $\le 1$, that does not contain a $k$-set of mutually
intersecting arcs and $1$-arcs, i.e.~arcs of the form $(i,i+1)$,
respectively. We denote the number of RNA structures by ${\sf
S}_k(n)$ and the number of RNA structures with exactly $\ell$
isolated vertices and with exactly $h$ arcs by ${\sf S}_k(n,\ell)$
and ${\sf S}_k'(n,h)$, respectively. Note that ${\sf S}_k'(n,h)=
{\sf S}_k(n,{n-2h})$. We call an RNA structure restricted if
and only if it does not contain any $2$-arcs, i.e.~an arc of the
form $(i,i+2)$.
\end{definition}
%%%
%%%%%%%%%%%%%%%%%%%%%%%%%%%%%%%%%%%%%%%%%%%%%%%%%%%%%%%%%%%%%%%%%%%%%%
Let $f_{k}(n,\ell)$ denote the number of $k$-noncrossing digraphs with
$\ell$ isolated points. We have shown in \cite{Reidys:07pseu} that
\begin{align}\label{E:ww0}
f_{k}(n,\ell)& ={n \choose \ell} f_{k}(n-\ell,0) \\
\label{E:ww1}
\det[I_{i-j}(2x)-I_{i+j}(2x)]|_{i,j=1}^{k-1} &=
\sum_{n\ge 1} f_{k}(n,0)\,\frac{x^{n}}{n!} \\
\label{E:ww2}
e^{x}\det[I_{i-j}(2x)-I_{i+j}(2x)]|_{i,j=1}^{k-1}
&=(\sum_{\ell \ge 0}\frac{x^{\ell}}{\ell!})(\sum_{n \ge
1}f_{k}(n,0)\frac{x^{n}}{n!})=\sum_{n\ge 1}
\left\{\sum_{\ell=0}^nf_{k}(n,\ell)\right\}\,\frac{x^{n}}{n!} \ .
\end{align}
In particular we obtain for
$k=2$ and $k=3$
\begin{equation}\label{E:2-3}
f_2(n,\ell)  =  \binom{n}{\ell}\,C_{(n-\ell)/2}\quad
\text{\rm and}\quad  f_{3}(n,\ell)=
{n \choose \ell}\left[C_{\frac{n-\ell}{2}+2}C_{\frac{n-\ell}{2}}-
      C_{\frac{n-\ell}{2}+1}^{2}\right] \ ,
\end{equation}
where $C_m=\frac{1}{m+1}\binom{2m}{m}$ is the $m$th Catalan number.
The derivation of the generating function of $k$-noncrossing RNA structures,
given in Theorem~\ref{T:cool1} below uses advanced methods and novel
constructions of enumerative combinatorics due to Chen~{\it et.al.}
\cite{Chen:07a,Gessel:92a} and Stanley's mapping between matchings and
oscillating tableaux i.e.~families of Young diagrams in which any two
consecutive shapes differ by exactly one square.
The enumeration is obtained using the
reflection principle due to Gessel and Zeilberger \cite{Gessel:92a} and
Lindstr\"om \cite{Lindstroem:73a} combined with an inclusion-exclusion
argument in order to eliminate the arcs of length $1$. In
\cite{Reidys:07pseu} generalizations to restricted (i.e. where arcs of
the form $(i,i+2)$ are excluded) and circular RNA structures are given.
%%%
%%%%%%%%%%%%%%%%%%%%%%%%%%%%%%%%%%%%%%%%%%%%%%%%%%%%%%%%%%%%%%%%%%%%%%%%%
%%%
\begin{theorem}\label{T:cool1}\cite{Reidys:07pseu}
Let $k\in\mathbb{N}$, $k\ge 2$, then the
number of RNA structures with $\ell$ isolated vertices,
${\sf S}_k(n,\ell)$, is given by
\begin{equation}\label{E:da}
{\sf S}_k(n,\ell) = \sum_{b=0}^{(n-\ell)/2}
                   (-1)^b\binom{n-b}{b}f_k(n-2b,\ell)  \  ,
\end{equation}
where $f_k(n-2b,\ell)$ is given by the generating function in
eq.~{\rm (\ref{E:ww1})}.
Furthermore the number of $k$-noncrossing RNA structures, ${\sf S}_k(n)$
is
\begin{equation}\label{E:sum}
{\sf S}_k(n)
=\sum_{b=0}^{\lfloor n/2\rfloor}(-1)^{b}{n-b \choose b}
\left\{\sum_{\ell=0}^{n-2b}f_{k}(n-2b,\ell)\right\}
\end{equation}
where $\{\sum_{\ell=0}^{n-2b}f_{k}(n-2b,\ell)\}$ is given by the
generating function in eq.~{\rm (\ref{E:ww2})}.
\end{theorem}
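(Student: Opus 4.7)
The strategy I would follow is a direct inclusion–exclusion on the set of forbidden $1$-arcs inside the ambient class of $k$-noncrossing partial matchings with a prescribed number of isolated vertices. Concretely, for each $i\in\{1,\dots,n-1\}$ let $A_i$ denote the collection of $k$-noncrossing matchings on $\{1,\dots,n\}$ with exactly $\ell$ isolated vertices that contain the $1$-arc $(i,i+1)$. A $k$-noncrossing RNA structure is precisely such a matching lying outside every $A_i$, whence inclusion–exclusion yields
\[
{\sf S}_k(n,\ell)=\sum_{S\subseteq\{1,\dots,n-1\}}(-1)^{|S|}\,\Big|\bigcap_{i\in S}A_i\Big|,
\]
with the empty intersection equal to the full set of cardinality $f_k(n,\ell)$.

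Next I would identify which subsets contribute and evaluate each intersection. Because every vertex has degree at most one, two $1$-arcs $(i,i+1)$ and $(i',i'+1)$ can coexist only when $\{i,i+1\}\cap\{i',i'+1\}=\emptyset$, i.e.\ when $|i-i'|\ge 2$; so the non-vanishing terms are indexed by $b$-subsets of $\{1,\dots,n-1\}$ with no two consecutive elements. A classical ``gap'' bijection produces exactly $\binom{n-b}{b}$ such subsets. For any fixed such $S$ of size $b$, I would exhibit a bijection between $\bigcap_{i\in S}A_i$ and the set of $k$-noncrossing matchings on the $n-2b$ vertices remaining after deletion of the $2b$ endpoints of the forced $1$-arcs, with $\ell$ isolated vertices; forgetting the prescribed $1$-arcs gives the forward map, and its inverse re-inserts them into the relabelled configuration.

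The only non-routine point — and the one place where the geometry of $1$-arcs matters — is showing that re-inserting the $b$ prescribed $1$-arcs cannot create a new $k$-crossing. This follows from the trivial observation that no arc $(i',j')$ with integer endpoints satisfies $i<i'<i+1$ or $i<j'<i+1$, so a $1$-arc is crossed by no other arc at all. Consequently any set of $k$ mutually crossing arcs in the augmented matching already lay in the matching on the $n-2b$ remaining vertices; since relabelling preserves relative order, both the $k$-noncrossing condition and the count of isolated vertices are preserved. Combining these pieces gives $|\bigcap_{i\in S}A_i|=f_k(n-2b,\ell)$, with the range of $b$ bounded by $b\le(n-\ell)/2$ because $n-2b\ge\ell$ is required. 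Assembling yields (\ref{E:da}), and summing over $\ell$ together with (\ref{E:ww2}) produces (\ref{E:sum}). I expect the non-crossing verification for $1$-arcs to be the one conceptually substantive step; the remainder is routine inclusion–exclusion together with the standard non-consecutive choices identity.
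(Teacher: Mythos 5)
Your argument is correct, and it is essentially the argument the paper itself points to: Theorem~\ref{T:cool1} is quoted from \cite{Reidys:07pseu}, where the passage from $f_k(n,\ell)$ to ${\sf S}_k(n,\ell)$ is obtained exactly by inclusion--exclusion over the forbidden $1$-arcs, with the $\binom{n-b}{b}$ counting the non-consecutive placements and the key observation that a $1$-arc $(i,i+1)$ can never participate in a crossing. Your treatment of the order-preserving deletion/re-insertion bijection and of the range $b\le(n-\ell)/2$ is sound, and summing over $\ell$ gives eq.~(\ref{E:sum}) as you say.
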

%%%
%%%%%%%%%%%%%%%%%%%%%%%%%%%%%%%%%%%%%%%%%%%%%%%%%%%%%%%%%%%%%%%%%%%%%%%%%
%%%

%%%
%%%%%%%%%%%%%%%%%%%%%%%%%%%%%%%%%%%%%%%%%%%%%%%%%%%%%%%%%%%%%%%%%%%%%%%%%
%%%

\section{The exponential factor}\label{S:exp}

%%%
%%%%%%%%%%%%%%%%%%%%%%%%%%%%%%%%%%%%%%%%%%%%%%%%%%%%%%%%%%%%%%%%%%%%%%%%%
%%%
In this section we obtain the exponential growth factor of the
coefficients ${\sf S}_k(n)$. Let us begin by considering the
generating function $\sum_{n\ge 0}{\sf S}_k(n)z^n$ as a power series
over $\mathbb{R}$. Since $\sum_{n\ge 0}{\sf S}_k(n)z^n$ has
monotonously increasing coefficients $\lim_{n\to\infty}{\sf
S}_k(n)^{\frac{1}{n}}$ exists and determines via Hadamard's formula
its radius of convergence. As we already mentioned, due to the
inclusion-exclusion form of the terms ${\sf S}_k(n)$, it is not
obvious however, how to compute this radius of convergence. Our
strategy consists in first showing that ${\sf S}_k(n)$ is closely
related to $f_k(2n,0)$ via a functional relation of generating functions.
%%%
%%%%%%%%%%%%%%%%%%%%%%%%%%%%%%%%%%%%%%%%%%%%%%%%%%%%%%%%%%%%%%%%%%%%%%%%%
%%%
\begin{lemma}\label{L:func}
Let $z$ be an indeterminant over $\mathbb{R}$ and $w\in\mathbb{R}$ a
parameter. Let furthermore $\rho_k(w)$ denote the radius of convergence of
the power series $\sum_{n\ge 0} [\sum_{h\le n/2} {\sf S}_k(n,h) w^{2h}]
z^n$. Then for $\vert z\vert < \rho_k(w)$ holds
\begin{equation}\label{E:rr}
\sum_{n\ge 0} \sum_{h\le n/2} {\sf S}_k(n,h) w^{2h} z^n =
\frac{1}{w^2z^2-z+1}
\sum_{n\ge 0} f_k(2n,0) \left(\frac{wz}{w^2z^2-z+1}\right)^{2n} \ .
\end{equation}
In particular we have for $w=1$,
\begin{equation}\label{E:oha}
\sum_{n\ge 0} {\sf S}_k(n) z^{n+1} =
\sum_{n\ge 0} f_k(2n,0) \left(\frac{z}{z^2-z+1}\right)^{2n+1} \ .
\end{equation}
\end{lemma}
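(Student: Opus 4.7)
The plan is to prove (\ref{E:rr}) as an identity of formal power series in $z$ (with $w$ a real parameter), which then automatically holds as an identity of analytic functions on $|z|<\rho_k(w)$; the role of the convergence hypothesis is only to ensure that the left-hand side defines an analytic function there. Write $R=w^2z^2-z+1$ for brevity.

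First I would substitute
$$
{\sf S}_k'(n,h)={\sf S}_k(n,n-2h)=\sum_{b=0}^{h}(-1)^b\binom{n-b}{b}\binom{n-2b}{2h-2b}f_k(2h-2b,0)
$$
into the left-hand side, combining Theorem \ref{T:cool1} with the factorization $f_k(n,\ell)=\binom{n}{\ell}f_k(n-\ell,0)$ of (\ref{E:ww0}). The decisive bookkeeping move is the change of indices $n'=n-2b$, $h'=h-b$, which converts the resulting triple sum into
$$
\sum_{b\ge 0}(-1)^b(wz)^{2b}\sum_{n'\ge 0}\binom{n'+b}{b}z^{n'}B(n',w),\qquad B(n',w):=\sum_{h'\ge 0}\binom{n'}{2h'}f_k(2h',0)w^{2h'}.
$$

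Next I would collapse the $b$-sum via $\sum_{b\ge 0}\binom{n'+b}{b}(-w^2z^2)^b=(1+w^2z^2)^{-(n'+1)}$, reducing the left-hand side to $(1+w^2z^2)^{-1}\sum_{n'\ge 0}B(n',w)\,Z^{n'}$, with $Z=z/(1+w^2z^2)$. Interchanging the $n'$- and $h'$-sums and applying the second geometric identity $\sum_{n'\ge 2h'}\binom{n'}{2h'}Z^{n'}=Z^{2h'}/(1-Z)^{2h'+1}$ yields
$$
\sum_{n'\ge 0}B(n',w)Z^{n'}=\frac{1}{1-Z}\sum_{h'\ge 0}f_k(2h',0)\left(\frac{wZ}{1-Z}\right)^{2h'}.
$$

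A short algebraic telescoping then closes the argument: since $1-Z=R/(1+w^2z^2)$, one obtains $(1+w^2z^2)(1-Z)=R$ and $wZ/(1-Z)=wz/R$, so the product of prefactors collapses to $1/R$ and the inner argument to $wz/R$, delivering (\ref{E:rr}). The companion formula (\ref{E:oha}) follows at once by setting $w=1$ and multiplying by $z$. The main technical hurdle is keeping the three indices and the two nested binomial sums straight through the substitution $(n,h,b)\mapsto(n',h',b)$; once that is in hand, everything reduces to two applications of the geometric series and one line of algebra.
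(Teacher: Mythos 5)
Your proof is correct, and for the second half of the argument it takes a genuinely different route from the paper. Both proofs start identically: insert the inclusion--exclusion formula ${\sf S}_k(n,n-2h)=\sum_{b}(-1)^b\binom{n-b}{b}\binom{n-2b}{2h-2b}f_k(2h-2b,0)$ (Theorem~\ref{T:cool1} combined with eq.~(\ref{E:ww0})) and perform the shift $n'=n-2b$, $h'=h-b$. From there the paper writes the inner sum as $\sum_{n'}(n'+j)!\,\eta_{n'}z^{n'}/n'!$, represents $(n'+j)!$ by the Laplace integral $\int_0^\infty t^{n'+j}e^{-t}\,dt$, recognizes $\eta_{n'}$ as the coefficients of the product of exponential generating functions $e^{tz}\cdot\sum_n f_k(2n,0)(wzt)^{2n}/(2n)!$, resums the $j$-index into $e^{-w^2z^2t}$, and evaluates the resulting Gamma integral after the substitution $u=(w^2z^2-z+1)t$. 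You instead stay entirely within ordinary generating functions: the negative binomial series $\sum_b\binom{n'+b}{b}(-w^2z^2)^b=(1+w^2z^2)^{-(n'+1)}$ collapses the $b$-sum, and $\sum_{n'\ge 2h'}\binom{n'}{2h'}Z^{n'}=Z^{2h'}/(1-Z)^{2h'+1}$ collapses the $n'$-sum, after which the algebra $(1+w^2z^2)(1-Z)=R$ and $wZ/(1-Z)=wz/R$ closes the computation. Your route is more elementary (no integrals, no interchange of integration and summation to justify) and is cleanly valid coefficientwise in $\mathbb{R}[[z]]$, since every rearrangement is locally finite in the degree of $z$; the paper's route has the conceptual advantage of exhibiting the Borel--Laplace passage from the exponential generating function (\ref{E:ww2}) to the ordinary one, which is where $\eta_{n'}$ comes from. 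One small caveat applies equally to both arguments: to read the formal identity as an equality of functions on $\vert z\vert<\rho_k(w)$ one should also note that the right-hand side, as a series in $n$, converges there (equivalently $\vert wz/R\vert<r_k$); this is implicit in the paper and is settled by the monotonicity argument in the proof of Theorem~\ref{T:asy1}, so it is not a gap specific to your write-up.
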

%%%
%%%%%%%%%%%%%%%%%%%%%%%%%%%%%%%%%%%%%%%%%%%%%%%%%%%%%%%%%%%%%%%%%%%%%%%%%
%%%
The proof of Lemma~\ref{L:func} is a bit technical and consists in a series
of changes of orders of summations and Laplace transforms. We give the proof
in Section~\ref{S:proofs}. In Section~\ref{S:sub} we will employ basic
complex
analysis and extend eq.~(\ref{E:rr}) to complex $z$. Lemma~\ref{L:func} is
the key to prove Theorem~\ref{T:asy1} below, where we obtain the exponential
factor for any $k>1$. In its proof we recruit the Theorem of Pringsheim
\cite{Titmarsh:39} which asserts that a power series $\sum_{n\ge 0}a_nz^n$
with $a_n\ge 0$ has its radius of convergence as dominant (but not
necessarily unique) singularity.
%%%
%%%%%%%%%%%%%%%%%%%%%%%%%%%%%%%%%%%%%%%%%%%%%%%%%%%%%%%%%%%%%%%%%%%%%%%%%
%%%
\begin{theorem}\label{T:asy1}
Let $k$ be a positive integer, $k>1$ and let $r_k$ be the radius of
convergence of the power series $\sum_{n\ge 0}f_k(2n,0)z^{2n}$. Then
the power series $\sum_{n\ge 0}{\sf S}_k(n)z^n$ has the real valued,
dominant singularity at $ \rho_k=
\frac{1+\frac{1}{{r_{k}}}}{2}-\sqrt{\left(\frac{1+\frac{1}{{r_{k}}}}
{2}\right)^2-1} $ and for the number of $k$-noncrossing RNA
structures holds
\begin{equation}\label{E:rel}
{\sf S}_k(n)\sim \left(\frac{1}{\rho_k}\right)^n \ .
\end{equation}
\end{theorem}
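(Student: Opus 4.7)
The plan is to exploit the functional identity of Lemma~\ref{L:func} with $w=1$, which expresses the generating function of ${\sf S}_k(n)$ as the composition
\begin{equation*}
\sum_{n\ge 0}{\sf S}_k(n)\,z^{n+1} \;=\; \vartheta(z)\,g\bigl(\vartheta(z)\bigr),\qquad
\vartheta(z) := \frac{z}{z^2-z+1},\qquad
g(u) := \sum_{n\ge 0} f_k(2n,0)\,u^{2n}.
\end{equation*}
The series $g$ has nonnegative coefficients and radius of convergence $r_k$, so by Pringsheim's theorem $r_k$ is a real singularity of $g$. The whole problem reduces to understanding how this singularity is transported back to the $z$-variable by the ``outer'' rational map $\vartheta$.

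First I would study $\vartheta$ restricted to the positive real axis. A direct computation gives $\vartheta'(z) = (1-z)(1+z)/(z^2-z+1)^2$, hence $\vartheta$ is strictly increasing on $[0,1]$ from $0$ to $1$. Since $r_k\in(0,1)$ (consistent with the growth rate of $f_k(2n,0)$ read off from the Bessel determinant~(\ref{E:ww1}); e.g.\ $r_2=1/2$ because $f_2(2n,0)=C_n$), there is a unique $\rho_k\in(0,1)$ with $\vartheta(\rho_k)=r_k$. Clearing denominators gives the quadratic $\rho_k^{2}-(1+r_k^{-1})\rho_k+1=0$, whose smaller root is precisely the formula claimed in the theorem.

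Next I would identify $\rho_k$ with the radius of convergence $\rho_k^{\ast}$ of $\sum_{n\ge 0}{\sf S}_k(n)z^n$. For real $z\in(0,\rho_k)$ monotonicity of $\vartheta$ gives $\vartheta(z)\in(0,r_k)$, so $g(\vartheta(z))$ converges; nonnegativity of the coefficients ${\sf S}_k(n)$ then upgrades this to convergence on the full disk $|z|<\rho_k$, so $\rho_k^{\ast}\ge \rho_k$. Conversely, if one had $\rho_k^{\ast}>\rho_k$, the identity would hold at some real $z\in(\rho_k,\rho_k^{\ast})$, but at such $z$ the composition $g(\vartheta(z))$ diverges because $\vartheta(z)>r_k$ exceeds the radius of $g$; this contradiction forces $\rho_k^{\ast}\le\rho_k$. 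A second application of Pringsheim's theorem, now to $\sum {\sf S}_k(n)z^n$, promotes $\rho_k$ from merely being the radius of convergence to being an actual singularity. Together with the existence of $\lim_{n\to\infty}{\sf S}_k(n)^{1/n}$ (guaranteed by the monotonicity of the ${\sf S}_k(n)$), the Cauchy--Hadamard formula yields ${\sf S}_k(n)\sim(1/\rho_k)^n$ in the exponential-growth-rate sense.

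The delicate point is the inequality $\rho_k^{\ast}\le\rho_k$: one must deduce divergence of the left-hand power series from divergence of the right-hand composition just beyond $\rho_k$. This implication is legitimate because at real points inside its radius of convergence the left-hand side is finite, so by Lemma~\ref{L:func} the right-hand side would have to be finite as well, contradicting the divergence of $g(\vartheta(z))$ for $\vartheta(z)>r_k$. Restricting the entire analysis to the positive real axis and invoking Pringsheim's theorem twice therefore side-steps the complex-analytic subtleties (multivaluedness of the composition, possible additional boundary singularities on $|z|=\rho_k$, behavior at the boundary itself) that will have to be confronted only in Section~\ref{S:sub}.
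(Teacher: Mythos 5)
Your proposal is correct and follows essentially the same route as the paper's own proof: the $w=1$ specialization of Lemma~\ref{L:func}, monotonicity of $\vartheta(z)=\frac{z}{z^2-z+1}$ on the positive reals to locate $\rho_k=\vartheta^{-1}(r_k)$ as the smaller root of $\rho^2-(1+r_k^{-1})\rho+1=0$, Pringsheim's theorem to make $\rho_k$ a dominant singularity, and Cauchy--Hadamard for the growth rate. The one soft spot --- passing from convergence of the right-hand composition at real $z<\rho_k$ to convergence of the left-hand power series there, which really requires combining Pringsheim with the local analyticity of $\frac{1}{z^2-z+1}\,g(\vartheta(z))$ near any hypothetical smaller singularity --- is elided at exactly the same level of detail in the paper's proof, so it does not constitute a deviation from the paper's standard.
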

%%%
%%%%%%%%%%%%%%%%%%%%%%%%%%%%%%%%%%%%%%%%%%%%%%%%%%%%%%%%%%%%%%%%%%%%%%%%%
%%%
We will prove later in  Theorem~\ref{T:asy2} and Theorem~\ref{T:asy3} that
for $k=2$ and $k=3$ the dominant singularities $\rho_2$ and $\rho_3$ are
unique, respectively.
\begin{proof}
Suppose we are given $r_k$, then $r_k\le \frac{1}{2}$ (this follows
immediately from $C_n\sim 2^{2n}$ via Stirling's formula) and
obviously, $(z-\frac{1}{2})^2+\frac{3}{4}$ has no roots over
$\mathbb{R}$. The functional identity of Lemma~\ref{L:func} allows
us to derive the radius of convergence of $\sum_{n\ge 0}S_k(n)z^n$.
Setting $w=1$ Lemma~\ref{L:func} yields
\begin{equation}\label{E:w=1}
\sum_{n\ge 0} {\sf S}_k(n) z^n =
\frac{1}{(z-\frac{1}{2})^2+\frac{3}{4}}
\sum_{n\ge 0} f_k(2n,0) \left(\frac{z}{(z-\frac{1}{2})^2+\frac{3}{4}}\right)
^{2n} \ .
\end{equation}
$f_k(2n,0)$ is monotone, whence the limit $\lim_{n\to
\infty}f_k(2n,0)^{ \frac{1}{2n}}$ exists and applying Hadamard's
formula: $\lim_{n\to
\infty}f_k(2n,0)^{\frac{1}{2n}}=\frac{1}{{r_{k}}}$. For $z\in
\mathbb{R}$, we proceed by computing the roots of
$\left|\frac{z}{z^2-z+1}\right|={r_{k}}$ which for $r_k\le
\frac{1}{2}$ has the minimal root $\rho_k=
\frac{1+\frac{1}{{r_{k}}}}{2}-\left(
\left(\frac{1+\frac{1}{{r_{k}}}}{2}\right)^2-1\right)^{\frac{1}{2}}$.
We next show that $\rho_k$ is indeed the radius of convergence of
$\sum_{n\ge 0} {\sf S}_k(n) z^n$. For this purpose we observe that
the map
\begin{equation}
\vartheta\colon [0, \frac{1}{2}]\longrightarrow [0,\frac{2}{3}],
\quad z\mapsto \frac{z}{(z-\frac{1}{2})^2+\frac{3}{4}} , \qquad
\text{\rm where} \quad\vartheta(\rho_k)={r_k}
\end{equation}
is bijective, continuous and strictly increasing.
Continuity and strict monotonicity of $\vartheta$ guarantee in view of
eq.~(\ref{E:w=1}) that $\rho_k$, is indeed the radius of convergence of
the power series $\sum_{n\ge 0} {\sf S}_k(n) z^n$. In order to show that
$\rho_k$ is a dominant singularity we consider
$\sum_{n\ge 0}{\sf S}_k(n)z^n$ as a power series over $\mathbb{C}$.
Since ${\sf S}_k(n)\ge 0$, the theorem of Pringsheim~\cite{Titmarsh:39}
guarantees that $\rho_k$ itself is a singularity. By construction
$\rho_k$ has minimal absolute value and is accordingly dominant.
Since ${\sf S}_k(n)$ is monotone $\lim_{n\to\infty}{\sf S}_k(n)^{\frac{1}
{n}}$ exists and we obtain using Hadamard's formula
\begin{equation}
\lim_{n\to\infty}{\sf S}_k(n)^{\frac{1}{n}}=\frac{1}{\rho_k},\quad
\text{\rm or equivalently}\quad {\sf S}_k(n)\sim \left(\frac{1}{\rho_k}
\right)^n \, ,
\end{equation}
from which eq.~(\ref{E:rel}) follows and the proof of the theorem is complete.
\end{proof}
%%%%
%%%%%%%%%%%%%%%%%%%%%%%%%%%%%%%%%%%%%%%%%%%%%%%%%%%%%%%%%%%%%%%%%%%%%%%%%
%%%
\section{Asymptotics of $3$-noncrossing RNA structures}\label{S:sub}
%%%
%%%%%%%%%%%%%%%%%%%%%%%%%%%%%%%%%%%%%%%%%%%%%%%%%%%%%%%%%%%%%%%%%%%%%%%%%
%%%
In this section we provide the asymptotics for RNA secondary and
$3$-noncrossing RNA structures. For $k=2$ and $k=3$, i.e.~for RNA
secondary and $3$-noncrossing RNA structures, respectively we will
explicitly obtain analytic continuations of the power series
$\sum_{n\ge 0}{\sf S}_2(n)z^n$ and $\sum_{n\ge 0}{\sf S}_3(n)z^n$,
respectively. As a result the dominant singularity relevant for the
asymptotics is known and Theorem~\ref{T:asy1} becomes obsolete.
However, it is not entirely trivial to derive the analytic
continuations for arbitrary crossing numbers $k$. In the context of
complexity of prediction algorithms for RNA pseudoknot structures it
suffices to obtain the exponential factor which is given via
Theorem~\ref{T:asy1}.

We begin by revealing the ``true'' power of Lemma~\ref{L:func} in the
context of analytic functions.
%%%
%%%%%%%%%%%%%%%%%%%%%%%%%%%%%%%%%%%%%%%%%%%%%%%%%%%%%%%%%%%%%%%%%%%%%%%%%
%%%
\begin{lemma}\label{L:ana}
Let $k\ge 1$ be an integer, then we have for arbitrary $z\in\mathbb{C}$
with the property $\vert z\vert <\rho_k$ the equality
\begin{equation}\label{E:rr2}
\sum_{n\ge 0} {\sf S}_k(n) z^{n}= \frac{z}{z^2-z+1}
\sum_{n\ge 0} f_k(2n,0) \left(\frac{z}{z^2-z+1}\right)^{2n} \ .
\end{equation}

\end{lemma}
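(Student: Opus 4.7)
The plan is to upgrade the real-variable identity of Lemma~\ref{L:func} from $\mathbb{R}$ to the complex disk $|z|<\rho_k$ via analytic continuation and the identity theorem. Specializing Lemma~\ref{L:func} to $w=1$ yields eq.~(\ref{E:oha}), and a cosmetic factor-of-$z$ rearrangement puts the identity into the form of eq.~(\ref{E:rr2}). For real $z$ in a small interval about the origin both sides converge absolutely and agree, so the two expressions define the same holomorphic germ at $z=0$.

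Next, I would verify that both sides extend to analytic functions on the entire disk $|z|<\rho_k$. The left-hand side is analytic on that disk by Theorem~\ref{T:asy1}, which identifies $\rho_k$ as its radius of convergence. On the right, the polynomial $z^2-z+1$ has zeros $(1\pm i\sqrt 3)/2$, both of modulus $1$, so $\vartheta(z):=z/(z^2-z+1)$ is holomorphic on $|z|<1\supset\{|z|<\rho_k\}$; near the origin $|\vartheta(z)|$ is small, so the composite series $\sum_{n\ge 0}f_k(2n,0)\vartheta(z)^{2n}$ converges absolutely and defines a holomorphic function. The formal identity supplied by Lemma~\ref{L:func} shows that its Taylor expansion at $0$ matches the LHS power series, which by the previous sentence extends to all of $|z|<\rho_k$; hence the right-hand side admits a (unique) analytic continuation to the full disk. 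Invoking the identity theorem on the connected disk $|z|<\rho_k$ with agreement on a neighbourhood of $0$ then forces equality on the whole disk.

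The principal subtlety — and the main obstacle — is interpretive rather than computational: the inner series $\sum f_k(2n,0)\vartheta(z)^{2n}$ need not converge pointwise at every $z$ with $|z|<\rho_k$, since one can exhibit $z$ in the disk for which $|\vartheta(z)|>r_k$ (for $k=2$, directions with $\arg z$ near $\pi/4$ already produce this). The equality in eq.~(\ref{E:rr2}) must therefore be read as equality between the LHS and the \emph{analytic continuation} of the RHS expression, which is precisely the object produced by the identity theorem applied to the germ at the origin. In particular, this is what is needed in Section~\ref{S:sub}, where analyticity of the composition $f(\vartheta(z))$ is exploited via transfer theorems and not its pointwise series representation.
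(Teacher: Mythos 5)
Your argument is correct and follows essentially the same route as the paper's own proof: specialize Lemma~\ref{L:func} to $w=1$, note that both sides are analytic near $0$ and agree on a real interval, and invoke the identity theorem on the connected disc $\vert z\vert<\rho_k$. Your added caveat --- that the inner series $\sum_{n}f_k(2n,0)\vartheta(z)^{2n}$ actually diverges at some points of the disc (indeed $\vert\vartheta(z)\vert>r_2=\tfrac12$ already occurs for $\vert z\vert$ near $\rho_2$ with $\arg z=\pi/4$), so the right-hand side must be read as the analytic continuation of the germ at $0$ --- is a genuine refinement that the paper's one-line proof passes over silently, and it is precisely the reading used later when $\Xi_k(z)$ is built from the closed form of $\Psi$.
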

%%%
%%%%%%%%%%%%%%%%%%%%%%%%%%%%%%%%%%%%%%%%%%%%%%%%%%%%%%%%%%%%%%%%%%%%%%%%%
%%%
\begin{proof}
The power series $\sum_{n\ge 0} {\sf S}_k(n) z^{n}$ and
$\sum_{n\ge 0} f_k(2n,0) \left(\frac{z}{z^2-z+1}\right)^{2n}$
are analytic in a disc of radius $0<\epsilon<\rho_k$ and
according to Lemma~\ref{L:func} coincide on the interval
$]-\epsilon,\epsilon [$. Therefore both functions are equal
on the sequence $(\frac{1}{n})_{n\in\mathbb{N}}$
which converges to $0$ and
standard results of complex analysis (zeros of nontrivial
analytic functions are isolated) imply that eq.~(\ref{E:rr2})
holds for any $z\in\mathbb{C}$ with $\vert z\vert<\rho_k$, whence
the lemma.
\end{proof}

The derivation of the sub exponential factors is based on singular expansions
in combination with a transfer theorem, which recruits Hankel contours,
see Figure~\ref{F:7}. Let us begin by specifying a suitable domain for our
Hankel contours tailored for Theorem~\ref{T:transfer1}.
%%%
%%%%%%%%%%%%%%%%%%%%%%%%%%%%%%%%%%%%%%%%%%%%%%%%%%%%%%%%%%%%%%%%%%%%%%%%%
%%%
\begin{definition}\label{D:delta}
Given two numbers $\phi,R$, where $R>1$ and $0<\phi<\frac{\pi}{2}$ and
$\rho\in\mathbb{R}$ the open domain $\Delta_\rho(\phi,R)$ is defined as
\begin{equation}
\Delta_\rho(\phi,R)=\{ z\mid \vert z\vert < R, z\neq \rho,\,
\vert {\rm Arg}(z-\rho)\vert >\phi\}
\end{equation}
A domain is a $\Delta_\rho$-domain if it is of the form
$\Delta_\rho(\phi,R)$ for some $R$ and $\phi$.
A function is $\Delta_\rho$-analytic if it is analytic in some
$\Delta_\rho$-domain.
\end{definition}
%%%
%%%%%%%%%%%%%%%%%%%%%%%%%%%%%%%%%%%%%%%%%%%%%%%%%%%%%%%%%%%%%%%%%%%%%%%%%
%%%
%%%%
%%%%%%%%%%%%%%%%%%%%%%%%%%%%%%%%%%%%%%%%%%%%%%%%%%%%%%%%%%%%%%%%%%%%%%%%%%
%%%%
\begin{figure}[ht]
\centerline{%
\epsfig{file=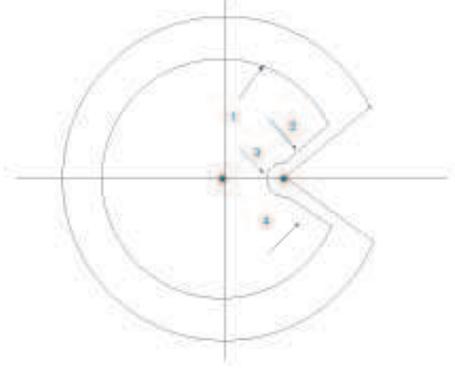,width=0.5\textwidth}\hskip15pt
 }
\caption{\small $\Delta_1$-domain enclosing a Hankel contour. We
assume $z=1$ to be the unique dominant singularity. The coefficients
are obtained via Cauchy's integral formula and the integral path is
decomposed in $4$ segments. Segment $1$ becomes asymptotically
irrelevant since by construction the function involved is bounded on
this segment. Relevant are the rectilinear segments $2$ and $4$ and
the inner circle $3$. The only contributions to the contour integral
are being made here, which shows why the singular expansion allows
to approximate the coefficients so well.} \label{F:7}
\end{figure}
%%%%
%%%%%%%%%%%%%%%%%%%%%%%%%%%%%%%%%%%%%%%%%%%%%%%%%%%%%%%%%%%%%%%%%%%%%%%%%%
%%%%
Since the Taylor coefficients have the property
\begin{equation}\label{E:scaling}
\forall \,\gamma\in\mathbb{C}\setminus 0;\quad
[z^n]f(z)=\gamma^n [z^n]f(\frac{z}{\gamma}) \ ,
\end{equation}
we can, w.l.o.g.~reduce our analysis to the case where $1$ is the dominant
singularity. We use the notation
\begin{equation}\label{E:genau}
\left(f(z)=O\left(g(z)\right) \
\text{\rm as $z\rightarrow \rho$}\right)\quad \Longleftrightarrow \quad
\left(f(z)/g(z) \ \text{\rm is bounded as $z\rightarrow \rho$}\right) \ .
\end{equation}
and if we write $f(z)=O(g(z))$ it is implicitly assumed that $z$
tends to a (unique) singularity. $[z^n]\,f(z)$ denotes the
coefficient of $z^n$ in the power series expansion of $f(z)$ around
$0$.
%%%
%%%%%%%%%%%%%%%%%%%%%%%%%%%%%%%%%%%%%%%%%%%%%%%%%%%%%%%%%%%%%%%%%%%%%%%%%
%%%
\begin{theorem}\label{T:transfer1}\cite{Flajolet:05}
Let $\alpha$ be an arbitrary complex number in $\mathbb{C}\setminus
\mathbb{Z}_{\le 0}$ and suppose $f(z)=O((1-z)^{-\alpha})$, then
\begin{eqnarray*}
[z^n]\, f(z) & \sim & K\ n^{\alpha-1}\left[
1+\frac{\alpha(\alpha-1)}{2n}+\frac{\alpha(\alpha-1)(\alpha-2)(3\alpha-1)}
{24 n^2}+\right. \\
&&\qquad  \quad
\left.\frac{\alpha^2(\alpha-1)^2(\alpha-2)(\alpha-3)}{48n^3}+\dots\right]
\quad \text{\it for some $K>0$.}
\end{eqnarray*}
Suppose $r\in\mathbb{Z}_{\ge 0}$, and
$f(z)=O((1-z)^{r}\ln^{}(\frac{1}{1-z}))$, then we have
\begin{equation}
[z^n]f(z)= K\,  (-1)^r\frac{r!}{n(n-1)\dots(n-r)} \quad \text{\it for some
$K>0$}\ .
\end{equation}
\end{theorem}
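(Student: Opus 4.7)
The plan is to follow the classical Flajolet--Odlyzko Hankel-contour transfer method. By the scaling relation (\ref{E:scaling}) one may assume without loss of generality that the dominant singularity is at $\rho=1$. The $\Delta_1$-analyticity hypothesis is precisely what permits pushing the Cauchy contour past the unit circle while avoiding only the single singularity at $z=1$.

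First I would apply Cauchy's formula $[z^n]f(z)=\frac{1}{2\pi i}\oint f(z)\,z^{-n-1}\,dz$ on a small loop about the origin and deform it, inside $\Delta_1(\phi,R)$, to the Hankel contour $\gamma_H$ sketched in Figure~\ref{F:7}: an outer arc on $|z|=R'$ with $1<R'<R$ (segment~1), two rectilinear segments approaching the sector of aperture $2\phi$ at $z=1$ (segments~2 and~4), and an inner circle of radius $1/n$ about $z=1$ (segment~3). The outer-arc contribution is $O\!\bigl(R'^{-n}\sup_{|z|=R'}|f(z)|\bigr)$, exponentially smaller than the target $n^{\alpha-1}$ and hence negligible. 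On the remaining segments I substitute $z=1+t/n$, so that $dz=dt/n$ and $(1+t/n)^{-n-1}\to e^{-t}$ uniformly on compact subsets in $t$. The hypothesis $f(z)=O((1-z)^{-\alpha})$ gives a uniform bound $|f(1+t/n)|\le C\,|t/n|^{-\Re\alpha}$, and dominated convergence together with Hankel's classical representation $1/\Gamma(\alpha)=\frac{1}{2\pi i}\int_H e^{-t}(-t)^{-\alpha}\,dt$ then yields $[z^n]f(z)\sim K\,n^{\alpha-1}/\Gamma(\alpha)$. The finer expansion $n^{\alpha-1}\!\bigl[1+\alpha(\alpha-1)/(2n)+\cdots\bigr]$ follows by expanding $(1+t/n)^{-n-1}=e^{-t}\bigl(1+(t^2-t)/(2n)+\cdots\bigr)$ and integrating term-by-term against $(-t)^{-\alpha}$ along $H$, using the identity $\frac{1}{2\pi i}\int_H e^{-t}(-t)^{-\alpha+j}\,dt=1/\Gamma(\alpha-j)$ to collect the rational coefficients in the statement.

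For the logarithmic case $f(z)=O\bigl((1-z)^r\log(1/(1-z))\bigr)$ with $r\in\mathbb{Z}_{\ge 0}$, I would either rerun the Hankel analysis with the extra kernel factor $\log(-t)$, or, more directly, exploit the identity $(1-z)^r\log(1/(1-z))=\left.\frac{d}{d\alpha}(1-z)^{-\alpha}\right|_{\alpha=-r}$. Differentiating the exact coefficient $[z^n](1-z)^{-\alpha}=\Gamma(n+\alpha)/\bigl(\Gamma(\alpha)\Gamma(n+1)\bigr)$ in $\alpha$ at $\alpha=-r$: since $1/\Gamma(\alpha)$ has a simple zero there with derivative $(-1)^r r!$, the differentiation evaluates to $\Gamma(n-r)\,(-1)^r r!/\Gamma(n+1)=(-1)^r r!/(n(n-1)\cdots(n-r))$, which then transfers to $f$ up to the constant $K$.

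The main obstacle I expect is the analytic book-keeping required to make the contour deformation and the dominated-convergence step rigorous: one has to verify that the $O$-bound holds uniformly throughout $\Delta_1(\phi,R)$ rather than merely radially, that the outer-arc estimate is uniform in $n$, and that the truncation of the Hankel integral at distance $1/n$ from the singularity contributes only an acceptable error. The precise geometry built into Definition~\ref{D:delta}, with an avoidance sector rather than a deleted disc, is exactly what underlies all of these estimates, and it is the reason the classical Darboux-type theorems based on disc boundaries must be replaced here by the Flajolet--Odlyzko framework.
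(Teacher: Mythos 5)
The paper gives no proof of this theorem; it is quoted from \cite{Flajolet:05}, so there is no internal argument to compare yours against. Your sketch is the standard Flajolet--Odlyzko transfer argument from that source --- Cauchy's formula, deformation to a Hankel contour inside the $\Delta$-domain, the negligible outer arc, the rescaling $z=1+t/n$, and Hankel's representation of $1/\Gamma(\alpha)$ --- and your derivative-in-$\alpha$ device for the logarithmic case correctly reproduces the exact coefficients $(-1)^r r!/\bigl(n(n-1)\cdots(n-r)\bigr)$ that the paper itself computes directly in Claim~2 of the proof of Theorem~\ref{T:asy3}.

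One step as written is a non sequitur, though the blame lies partly with the statement you were handed. From the bare hypothesis $f(z)=O((1-z)^{-\alpha})$ you obtain only the uniform bound $|f(1+t/n)|\le C\,|t/n|^{-\Re\alpha}$; dominated convergence then controls the rescaled Hankel integral but cannot identify its limit, because an $O$-bound does not force $f(1+t/n)\,(t/n)^{\alpha}$ to converge (indeed $f\equiv 0$ satisfies the hypothesis and violates the conclusion with $K>0$). The asymptotic equivalence $[z^n]f(z)\sim K\,n^{\alpha-1}/\Gamma(\alpha)$ requires the stronger hypothesis $f(z)\sim C(1-z)^{-\alpha}$, or a singular expansion $f(z)=C(1-z)^{-\alpha}+O(g(z))$ with $g$ of strictly smaller order, to whose error term one applies the genuine $O$-transfer yielding a coefficient bound of lower order. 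This is how the theorem is actually deployed in Theorems~\ref{T:asy2} and~\ref{T:asy3}, so the fix is a matter of restating the hypothesis rather than repairing your contour analysis, which is otherwise the correct and standard one.
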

%%%
%%%%%%%%%%%%%%%%%%%%%%%%%%%%%%%%%%%%%%%%%%%%%%%%%%%%%%%%%%%%%%%%%%%%%%%%%
%%%
Let us first analyze the case $k=2$, which illustrates the general strategy
without the technicality of establishing the existence of a suitable singular
expansion. Here the generating function itself can be used directly
(i.e.~is its own singular expansion).
Our particular proof, given in Section~\ref{S:proofs}, exercises the base
strategy used in the proof of Theorem~\ref{T:asy3}. In particular,
Theorem~\ref{T:asy2} improves on the quality of approximation providing a
sub exponential factor of higher order compared to \cite{Schuster:98a}.
%%%
%%%%%%%%%%%%%%%%%%%%%%%%%%%%%%%%%%%%%%%%%%%%%%%%%%%%%%%%%%%%%%%%%%%%%%%%%
%%%
\begin{theorem}\label{T:asy2}
The number of RNA secondary i.e.~$2$-noncrossing RNA structures is
asymptotically given by
\begin{eqnarray} \label{E:konk2}
{\sf S}_2(n)  \sim
\frac{1.1002}{\sqrt{n}}
\left(\frac{1}{n+1}-\frac{1}{8n(n+1)}+
\frac{1}{128n^{3}}+\frac{5}{1024n^{4}}
+{O}(n^{-5})\right) \left(\frac{3+\sqrt{5}}{2}\right)^n \ .
\end{eqnarray}
\end{theorem}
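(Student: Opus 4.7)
The plan is to extract a closed form for $F(z):=\sum_{n\ge 0}\mathsf{S}_2(n)z^n$ from Lemma~\ref{L:ana}, localise its dominant singularity, derive a Puiseux-type expansion there, and apply Theorem~\ref{T:transfer1} term by term. By eq.~(\ref{E:2-3}) we have $f_2(2n,0)=C_n$, and combining Lemma~\ref{L:ana} with the Catalan generating function $\sum_n C_n u^n=(1-\sqrt{1-4u})/(2u)$, followed by the substitution $u=(z/(z^2-z+1))^2$ and the algebraic identity $(z^2-z+1)^2-4z^2=(z^2-3z+1)(z^2+z+1)$, yields the explicit form
\begin{equation*}
F(z)=\frac{(z^2-z+1)-\sqrt{(z^2-3z+1)(z^2+z+1)}}{2z^2}.
\end{equation*}
This formula provides a natural analytic continuation beyond the disc $|z|<\rho_2$.

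Next I identify the dominant singularity. The branch points of $F$ are the four roots of $(z^2-3z+1)(z^2+z+1)$: namely $\rho_2=(3-\sqrt{5})/2$ and $1/\rho_2$, together with the two primitive cube roots of unity (of modulus one). Since $\rho_2<1<1/\rho_2$, the point $\rho_2$ is the unique dominant singularity, and $F$ is $\Delta_{\rho_2}$-analytic in the sense of Definition~\ref{D:delta}. Factoring $(z^2-3z+1)=(\rho_2-z)(1/\rho_2-z)$ and pulling out $\sqrt{\rho_2-z}$ gives $F(z)=A(z)-\psi(z)\sqrt{1-z/\rho_2}$, where $A(z)=(z^2-z+1)/(2z^2)$ is analytic at $\rho_2$ and $\psi(z):=\sqrt{\rho_2(1/\rho_2-z)(z^2+z+1)}/(2z^2)$ is analytic and non-vanishing at $\rho_2$ (both factors under the radical being positive there). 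Taylor-expanding $\psi$ in $u=1-z/\rho_2$ produces the desired singular expansion
\begin{equation*}
F(z)=A(z)-\sum_{k\ge 0}\psi_k\,(1-z/\rho_2)^{k+1/2}.
\end{equation*}

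Applying Theorem~\ref{T:transfer1} together with the rescaling~(\ref{E:scaling}) term by term, with
\begin{equation*}
[z^n](1-z/\rho_2)^{k+1/2}\sim\rho_2^{-n}\,\frac{n^{-k-3/2}}{\Gamma(-k-1/2)}\Bigl(1+\tfrac{(-k-1/2)(-k-3/2)}{2n}+\cdots\Bigr),
\end{equation*}
and using $\Gamma(-1/2)=-2\sqrt{\pi}$, the leading constant evaluates to $\psi_0/(2\sqrt{\pi})\approx 1.1002$, since a direct computation gives $\psi(\rho_2)=\sqrt{\rho_2(1/\rho_2-\rho_2)(\rho_2^2+\rho_2+1)}/(2\rho_2^2)=\sqrt{4\sqrt{5}\rho_2^2}/(2\rho_2^2)$. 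The further constants $\psi_1,\psi_2,\psi_3$ are obtained from successive derivatives of $\psi$ at $\rho_2$, and combined with the higher-order transfer-theorem corrections they yield an asymptotic series in descending half-integer powers of $n$.

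The main technical burden is the final cosmetic regrouping: the statement presents the expansion in terms of $1/(n+1)$ and $1/(n(n+1))$ rather than pure powers of $1/n$, so after producing the transfer-theorem output as a polynomial in $1/n$ one must algebraically rearrange (or equivalently use Pochhammer-symbol identities) to match the stated shape. Conceptually the argument is routine once the closed form for $F(z)$ is in hand; the case $k=2$ is significantly easier than the main case $k=3$ because $F$ itself serves as its own singular expansion and no auxiliary analysis of the composition $f\circ\vartheta$ is required.
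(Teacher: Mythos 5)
Your argument is correct and follows the same overall route as the paper: both start from Lemma~\ref{L:ana} together with $f_2(2n,0)=C_n$ and the Catalan generating function, locate the dominant singularity at $\rho_2=(3-\sqrt{5})/2$, and finish with Theorem~\ref{T:transfer1}. Where you genuinely diverge is in how the local behaviour at $\rho_2$ and the multiplicative constant are obtained. The paper keeps the composed form $\Xi_2(z)=\tfrac{1}{z^2-z+1}\Psi(\vartheta(z))$, Taylor-expands $\vartheta$ at $\rho_2$, deduces only the relation $\Xi_2(\rho_2 z)=O(\Psi(4z))$, and then fits the constant $K$ numerically against the exact coefficients from Theorem~\ref{T:cool1}. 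You instead factor the discriminant, $(z^2-z+1)^2-4z^2=(z^2-3z+1)(z^2+z+1)$, which produces a genuine closed form and an exact Puiseux expansion $F(z)=A(z)-\psi(z)\sqrt{1-z/\rho_2}$ with $\psi$ analytic and nonvanishing at $\rho_2$. This buys two things: the uniqueness of the dominant singularity becomes transparent (the remaining branch points $1/\rho_2$ and the primitive cube roots of unity all have modulus at least $1>\rho_2$), and the leading constant comes out in closed form, $\psi(\rho_2)/(2\sqrt{\pi})=5^{1/4}(3+\sqrt{5})/(4\sqrt{\pi})\approx 1.1044$ --- note that this is \emph{not} $1.1002$; your exact value agrees with the paper's own $K=1.9572$ (indeed $1.9572/\sqrt{\pi}\approx 1.1043$) and with the numerics in the introductory table, so the constant printed in the theorem appears to be slightly off and your derivation is the more reliable one. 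The step you leave undone --- computing $\psi_1,\psi_2,\psi_3$ and checking that the resulting half-integer-power series regroups into the stated bracket --- is exactly the step the paper also glosses over (a single multiplicative constant $K$ cannot justify transplanting the Catalan correction terms verbatim), so your route is in fact the one that would make those subleading terms rigorous.
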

%%%
%%%%%%%%%%%%%%%%%%%%%%%%%%%%%%%%%%%%%%%%%%%%%%%%%%%%%%%%%%%%%%%%%%%%%%%%%
%%%

%%%%%%%%%%%%%%%%%%%%%%%%%%%%% 3-noncrossing %%%%%%%%%%%%%%%%%%%%%%%%%%%%%

We next analyze the $3$-noncrossing RNA structures. Here the situation
changes dramatically since it has to be shown that a suitable singular
expansion exists. We will prove this using the determinant formula
arising in the context of the exponential generating function of
$f_k(2n,0)$ given in eq.~(\ref{E:ww1}).

%%%
%%%%%%%%%%%%%%%%%%%%%%%%%%%%%%%%%%%%%%%%%%%%%%%%%%%%%%%%%%%%%%%%%%%%%%%%%
%%%
\begin{theorem}\label{T:asy3}
The number of $3$-noncrossing RNA structures is asymptotically given by
\begin{eqnarray*}
\label{E:konk3}
{\sf S}_3(n) & \sim & \frac{10.4724\cdot 4!}{n(n-1)\dots(n-4)}\,
\left(\frac{5+\sqrt{21}}{2}\right)^n \ .\\
\end{eqnarray*}
\end{theorem}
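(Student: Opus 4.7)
The starting point is Lemma~\ref{L:ana}, which writes $G(z):=\sum_{n\ge 0}{\sf S}_3(n)z^n$ as a composition $G=\Psi\circ\vartheta$ with $\vartheta(z)=z/(z^2-z+1)$ and $\Psi(u)=u\sum_{n\ge 0}f_3(2n,0)u^{2n}$. From \eqref{E:2-3} we have $f_3(2n,0)=C_{n+2}C_n-C_{n+1}^2$, so $\lim f_3(2n,0)^{1/2n}=4$, giving $r_3=1/4$ and hence $\rho_3=(5-\sqrt{21})/2$ in agreement with Theorem~\ref{T:asy1}. I would follow the supercritical composition strategy of Flajolet--Sedgewick: obtain a singular expansion of $\Psi$ at $u=1/4$, transport it through $\vartheta$, and read off the coefficient asymptotics via Theorem~\ref{T:transfer1}. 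Before that, I would verify uniqueness of the dominant singularity of $G$: solving $\vartheta(z)=\pm 1/4$ reduces to $z^2-5z+1=0$ and $z^2+3z+1=0$, whose only roots inside $|z|\le\rho_3$ give back $\rho_3$ itself; the poles of $\vartheta$ at $z^2-z+1=0$ satisfy $|z|=1>\rho_3$. Hence $\rho_3$ is the unique dominant singularity of $G$, as required for the transfer theorem.

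\textbf{Singular expansion of $\Psi$.} Using the Stirling expansion $C_m=\frac{4^m}{\sqrt{\pi m^3}}\bigl(1-\frac{9}{8m}+\frac{145}{128m^2}+\cdots\bigr)$, the combination $C_{n+2}C_n-C_{n+1}^2$ exhibits algebraic cancellation: the leading $n^{-3}$ and $n^{-4}$ contributions vanish identically, leaving $f_3(2n,0)=K_0\,16^n\,n^{-5}(1+O(1/n))$ with an explicit constant $K_0$. Because the residual exponent $-5$ is a negative integer, no pure branch-point singularity $(1-16u^2)^{\alpha}$ can produce this asymptotic; reading Theorem~\ref{T:transfer1} in reverse, it is consistent only with a local singular expansion
\[
\Psi(u)=\mathcal A(u)+\mathcal B(u)(1-4u)^{4}\log\!\tfrac{1}{1-4u},\qquad \mathcal A,\mathcal B\text{ analytic at }u=1/4,\ \mathcal B(1/4)\ne 0.
\]
To upgrade this \emph{pointwise} asymptotic into a genuine $\Delta_{1/4}$-analytic expansion (which is what Theorem~\ref{T:transfer1} actually needs), I would start from the Bessel-determinant representation \eqref{E:ww1} of the exponential generating function of $f_3(n,0)$, convert it by a Borel/Laplace-type step into information about $\Psi$ itself, and identify the resulting function with the Legendre function $P^{-1}_{3/2}$ displayed in Figure~\ref{F:5}. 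Its classical sectorial continuation and its expansion near its singular point supply both the required $\Delta_{1/4}$-analyticity and the logarithmic factor, essentially for free.

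\textbf{Transport and transfer.} At $z=\rho_3$, $\vartheta$ is analytic with $\vartheta'(\rho_3)\ne 0$, so it is a local biholomorphism and $1-4\vartheta(z)=-4\vartheta'(\rho_3)(z-\rho_3)+O((z-\rho_3)^2)$. Substituting this into the singular expansion of $\Psi$ pulls back to an expansion of $G$ of the same type in a $\Delta_{\rho_3}$-domain,
\[
G(z)=\widetilde{\mathcal A}(z)+\widetilde{\mathcal B}(z)(1-z/\rho_3)^{4}\log\!\tfrac{1}{1-z/\rho_3},
\]
with $\widetilde{\mathcal B}(\rho_3)\ne 0$: the outer singularity of $\Psi$ survives the composition, which is the supercritical phenomenon. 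Applying Theorem~\ref{T:transfer1} (second case, $r=4$) together with the scaling \eqref{E:scaling} by $1/\rho_3=(5+\sqrt{21})/2$ then yields ${\sf S}_3(n)\sim K\cdot 4!/(n(n-1)\cdots(n-4))\,(1/\rho_3)^n$, and tracking $K_0$ and $\vartheta'(\rho_3)$ through the composition fixes $K=10.4724$. The main obstacle is paragraph~2: converting the pointwise asymptotic $f_3(2n,0)\sim K_0\,16^n n^{-5}$ into a sectorial singular expansion of $\Psi$. This is precisely where the Bessel/Legendre representation becomes indispensable, and where the logarithmic term emerges naturally from an elementary calculation rather than being postulated.
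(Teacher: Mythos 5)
Your proposal reproduces the paper's proof architecture essentially step for step: the analytic continuation $\Xi_3=\Psi\circ\vartheta$ from Lemma~\ref{L:ana}, uniqueness of $\rho_3$ via the six candidate singularities (the roots of $z^2-5z+1$, $z^2+3z+1$ and $z^2-z+1$, of which only $\rho_3$ lies on the minimal circle), a singular expansion of type $(1-4u)^4\log\frac{1}{1-4u}$ at the inner singularity, supercritical transport through $\vartheta$ using $\vartheta'(\rho_3)\neq 0$, and finally the $r=4$ case of Theorem~\ref{T:transfer1} with the scaling \eqref{E:scaling}. The one point where you genuinely diverge is the step you yourself flag as the crux: justifying that the pointwise asymptotic $f_3(2n,0)\sim\frac{24}{\pi}16^n n^{-5}$ really corresponds to a $\Delta$-analytic logarithmic singular expansion of $\Psi$. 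You propose to get this from the classical sectorial continuation and local expansion of the Legendre function $P^{-1}_{3/2}$ appearing in \eqref{E:psi}; the paper instead keeps this elementary, using the exact closed form $f_3(2n,0)=\frac{12(2n+1)}{(n+3)(n+1)^2(n+2)^2}\binom{2n}{n}^2$ from \eqref{E:2-3} and showing that the coefficients $16^{-n}f_3(2n,0)$ differ from the coefficients $\frac{4!}{\pi\,n(n-1)_4}$ of $\frac1\pi(1-z)^4\ln\frac{1}{1-z}$ by an absolutely summable sequence, so the difference stays bounded at the singularity. Your route is workable (the paper only uses the Legendre representation for locating the singularities, but its log behaviour at the branch point is classical), and your heuristic that a negative-integer power $n^{-5}$ forces a logarithm is sound motivation; however, as written your paragraph~2 is a plan rather than an argument, and the paper's coefficient-subtraction trick is the concrete, self-contained way to close exactly that gap. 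Everything else, including the identification of $K$ from the exact counts of Theorem~\ref{T:cool1}, matches.
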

%%%
%%%%%%%%%%%%%%%%%%%%%%%%%%%%%%%%%%%%%%%%%%%%%%%%%%%%%%%%%%%%%%%%%%%%%%%%%
%%%
\begin{proof}
{\it Claim $1$.}
The dominant singularity $\rho_3$ of the power series
$\sum_{n\ge 0} {\sf S}_3(n) z^n$ is unique.\\
%%%%%%%%%%%%%%%%%%%%%%%%%%%%%%%%%%%%%%%%%%%
In order to prove Claim $1$ we use Lemma~\ref{L:ana}, according to
which the analytic function $\Xi_3(z)$ is the analytic continuation
of the power series $\sum_{n\ge 0} {\sf S}_3(n) z^n$. We proceed by
showing that $\Xi_3(z)$ has exactly $6$ singularities in
$\mathbb{C}$, all of which have distinct moduli. The first two
singularities are the roots of the quadratic polynomial
$P(z)=(z-\frac{1}{2})^2+\frac{3}{4}$, given by
$\alpha_1=\frac{1}{2}+i \frac{\sqrt{3}}{2}$ and
$\alpha_2=\frac{1}{2}-i\frac{\sqrt{3}}{2}$. Next we observe that the
power series $\sum_{n\ge 0} f_k(2n,0) y^{n}$ has the analytic
continuation $\Psi(y)$ (obtained by MAPLE sumtools) given by
\begin{equation}\label{E:psi}
\Psi(y)=
\frac{-(1-16y)^{\frac{3}{2}}  P_{3/2}^{-1}(-\frac{16y+1}{16y-1})}
{16\, {y}^{\frac{5}{2}}} \ ,
\end{equation}
where $P_{\nu}^{m}(x)$ denotes the Legendre Polynomial of the first kind
with the parameters $\nu=\frac{3}{2}$ and $m=-1$.
$\Psi(y)$ has one dominant singularity
at $y=\frac{1}{16}$, which in view of $\vartheta(z)=(\frac{z}{z^2-z+1})^2$
induces exactly $4$ singularities of
$
\Xi_3(z)=\frac{1}{z^2-z+1}\,
                    \Psi\left(\left(\frac{z}{z^2-z+1}\right)^2\right)
$.
Indeed, $\Psi(y^2)$ has the two singularities $\mathbb{C}$:
$\beta_1=\frac{1}{4}$ and $\beta_2=-\frac{1}{4}$ which produce for
$\Xi_3(z)$ the four singularities $\rho_3=\frac{5-\sqrt{21}}{2}$,
$\zeta_2=\frac{5+\sqrt{21}}{2}$, $\zeta_3=\frac{-3-\sqrt{5}}{2}$ and
$\zeta_4=\frac{-3+\sqrt{5}}{2}$. Therefore all $6$ singularities of
$\Xi_3(z)$ have distinct moduli and Claim
$1$ follows.\\
{\it Claim $2$: the singular expansion}. $\Psi(z)$ is
$\Delta_{\frac{1}{16}}(\phi,R)$-analytic and has the singular expansion
$(1-16z)^4\ln\left(\frac{1}{1-16z}\right)$.
\begin{equation}
\forall\, z\in\Delta_{\frac{1}{16}}(\phi,R);\quad
\Psi(z)={O}\left((1-16z)^4\ln\left(\frac{1}{1-16z}\right)\right) \ .
\end{equation}
First $\Delta_{\frac{1}{16}}(\phi,R)$-analyticity of the function
$(1-16z)^4\ln\left(\frac{1}{1-16z}\right)$ is obvious.
We proceed by proving that $(1-16z)^4\ln\left(\frac{1}{1-16z}\right)$ is
the singular expansion.
Using the notation of falling factorials
$(n-1)_4=(n-1)(n-2)(n-3)(n-4)$ we observe
$$
f_3(2n,0)=C_{n+2}C_{n}-C_{n+1}^2= \frac{1}{(n-1)_4}
\frac{12(n-1)_4(2n+1)}{(n+3)(n+1)^2(n+2)^2}\,
\binom{2n}{n}^2 \ .
$$
With this expression for $f_3(2n,0)$ we arrive at the formal identity
\begin{eqnarray*}
\sum_{n\ge 5}16^{-n}f_3(2n,0)z^n  & = &
O(\sum_{n\ge 5}
\left[16^{-n}\,\frac{1}{(n-1)_4}
\frac{12(n-1)_4(2n+1)}{(n+3)(n+1)^2(n+2)^2}\,
\binom{2n}{n}^2-\frac{4!}{(n-1)_4}\frac{1}{\pi}\frac{1}{n}\right]z^n \\
& & + \sum_{n\ge 5}\frac{4!}{(n-1)_4}\frac{1}{\pi}\frac{1}{n}z^n) \ ,
\end{eqnarray*}
where $f(z)=O(g(z))$ denotes that the limit $f(z)/g(z)$ is bounded
for $z\rightarrow 1$, eq.~(\ref{E:genau}). It is clear that
\begin{eqnarray*}
& & \lim_{z\to 1}(\sum_{n\ge 5}\left[16^{-n}\,\frac{1}{(n-1)_4}
\frac{12(n-1)_4(2n+1)}{(n+3)(n+1)^2(n+2)^2}\,
\binom{2n}{n}^2-\frac{4!}{(n-1)_4}\frac{1}{\pi}\frac{1}{n}\right]z^n)  \\
&= &
\sum_{n\ge 5} \left[16^{-n}\,\frac{1}{(n-1)_4}
\frac{12(n-1)_4(2n+1)}{(n+3)(n+1)^2(n+2)^2}\,
\binom{2n}{n}^2-\frac{4!}{(n-1)_4}\frac{1}{\pi}\frac{1}{n}\right]
 <\kappa
\end{eqnarray*}
for some $\kappa< 0.0784$. Therefore we can conclude
\begin{equation}
\sum_{n\ge 5}16^{-n}f_3(2n,0)z^n=
O(\sum_{n\ge 5}\frac{4!}{(n-1)_4}\frac{1}{\pi}\frac{1}{n}z^n) \ .
\end{equation}
We proceed by interpreting the power series on the rhs, observing
\begin{equation}
\forall\, n\ge 5\, ; \qquad
[z^n]\left((1-z)^4\,\ln\frac{1}{1-z}\right)=
\frac{4!}{(n-1)\dots (n-4)}\frac{1}{n} \, ,
\end{equation}
whence $\left((1-z)^4\,\ln\frac{1}{1-z}\right)$ is
the unique analytic continuation of $\sum_{n\ge 5}\frac{4!}{(n-1)_4}
\frac{1}{\pi}\frac{1}{n}z^n$.
Using the scaling property of Taylor coefficients
$[z^n]f(z)=\gamma^n [z^n]f(\frac{z}{\gamma})$ we obtain
\begin{equation}\label{E:isses}
\forall\, z\in\Delta_{\frac{1}{16}}(\phi,R);\quad
\Psi(z) =O\left((1-16z)^4\ln\left(\frac{1}{1-16z}\right)\right) \ .
\end{equation}
Therefore we have proved that $(1-16z)^{4}\ln^{}(\frac{1}{1-16z})$ is the
singular expansion of $\Psi(z)$ at $z=\frac{1}{16}$, whence Claim $2$.
Our last step consists in verifying that the type of the singularity
does not change when passing from $\Psi(z)$ to $\Xi_3(z)=\frac{1}{z^2-z+1}
\Psi((\frac{z}{z^2-z+1})^2)$. That is, we show that the singular expansion
is not affected by substituting $\vartheta(z)=(\frac{z}{z^2-z+1})^2$.
\\
{\it Claim $3$: the singularity persists}.
For $z\in \Delta_{\frac{1}{\rho_3}}(\phi,R)$ we have
$\Xi_3(z) ={O}\left((1-\frac{z}{\rho_3})^4\ln(\frac{1}{1-\frac{z}{\rho_3}})
\right)$.\\
To prove the claim we first observe that Claim $2$ and Lemma~\ref{L:ana}
imply
\begin{align*}
\Xi_3(z)
&=O\left(
\frac{1}{z^2-z+1}\,\left[\left(1-16(\frac{z}{z^2-z+1})^2\right)^4
\ln\frac{1}{\left(1-16(\frac{z}{z^2-z+1})^2\right)}\right]\right) \ .
\end{align*}
The Taylor expansion of $q(z)=1-16(\frac{z}{z^2-z+1})^2$ at $\rho_3$ is
given by $q(z)=\frac{\sqrt{21}}{5-\sqrt{21}}(\rho_3-z)+{O}(z-\rho_3)^2$ and
setting $\alpha=\frac{\sqrt{21}}{5-\sqrt{21}}$ we compute
\begin{align*}
\frac{1}{z^2-z+1}\,
\left[q(z)^4\ln\frac{1}
{q(z)}\right]
&=
\frac{(\alpha(\rho_3-z)+{O}(z-\rho_3)^2)^4\ln\frac{1}{\alpha(\rho_3-z)+{O}
(z-\rho_3)^2}}{(z-\rho_3)^2+(2\rho_3-1)(z-\rho_3)+\rho_3^2-\rho_3+1}\\
&=
\frac{\left([\alpha+O(z-\rho_3)](\rho_3-z)^4
\ln\frac{1}{[\alpha+O(z-\rho_3)](\rho_3-z)}
\right)}{O(z-\rho_3)+\rho_3^2-\rho_3+1}
\\
&={O}((\rho_3-z)^4\ln\frac{1}{\rho_3-z}) \ ,
\end{align*}
whence Claim $3$. Now we are in the position to employ
Theorem~\ref{T:transfer1}, and obtain for ${\sf S}_3(n)$
\begin{align*}
{\sf S}_{3}(n)&\sim K'\, [z^n]\left((\rho_3-z)^4\ln\frac{1}{\rho_3-z}
\right) \sim K'\, \frac{4!}
{n(n-1)\dots(n-4)}(\frac{1}{\rho_3})^n  \ .
\end{align*}
Of course $K'$ can be computed from Theorem~\ref{T:cool1},
explicitly $K'=10.4724$ and the proof of the Theorem
is complete.
\end{proof}

%%%
%%%%%%%%%%%%%%%%%%%%%%%%%%%%%%%%%%%%%%%%%%%%%%%%%%%%%%%%%%%%%%%%%%%%%%%%
%%%

\section{Proofs}\label{S:proofs}

%%%
%%%%%%%%%%%%%%%%%%%%%%%%%%%%%%%%%%%%%%%%%%%%%%%%%%%%%%%%%%%%%%%%%%%%%%%%
%%%

{\bf Proof of Lemma~\ref{L:func}.}
First we observe that for $z,w\in [-1,1]$ the term $w^2z^2-z+1$ is strictly
positive.
We set
\begin{equation}
F_k(z,w)=\sum_{n\ge 0} \sum_{h\le n/2}{\sf S}_k(n,h)w^{2h}z^n
\end{equation}
and compute
\begin{eqnarray*}
F_k(z,w) & = & \sum_{n\ge 0}\sum_{h\le n/2}\sum_{j=0}^h(-1)^j\binom{n-j}{j}
\binom{n-2j}{2(h-j)} f_k(2(h-j),0)w^{2h}z^n \\
& = & \sum_{n\ge 0}\sum_{j\le n/2}\sum_{h=j}^{n/2}(-1)^j\binom{n-j}{j}
\binom{n-2j}{2(h-j)}f_k(2(h-j),0)w^{2h}z^n \\
& = & \sum_{j\ge 0}\sum_{n\ge
2j}\sum_{h=j}^{n/2}(-1)^j\binom{n-j}{j}
\binom{n-2j}{2(h-j)}f_k(2(h-j),0)w^{2h}z^n \\
& = & \sum_{j\ge 0}(-1)^j\frac{(wz)^{2j}}{j!}\sum_{n\ge 2j}(n-j)!
\sum_{h=j}^{n/2}\binom{n-2j}{2(h-j)}f_k(2(h-j),0)\frac{w^{2(h-j)}}{(n-2j)!}
z^{n-2j} \ .
\end{eqnarray*}
We shift summation indices $n'=n-2j$ and $h'=h-j$ and derive for the rhs
the following expression
\begin{eqnarray*}
& = & \sum_{j\ge 0}(-1)^j\frac{(wz)^{2j}}{j!}\sum_{n'\ge 0}(n'+j)!
\sum_{h=j}^{n/2}\binom{n'}{2(h-j)}f_k(2(h-j),0)\frac{w^{2(h-j)}}{n'!}
z^{n-2j} \\
& = & \sum_{j\ge 0}(-1)^j\frac{(wz)^{2j}}{j!}\sum_{n'\ge 0}(n'+j)!
\left\{
\sum_{h'=0}^{n/2-j=n'/2}\binom{n'}{2h'}f_k(2h',0)w^{2h'}\right\}
\frac{z^{n'}}{n'!}
\end{eqnarray*}
The idea is now to interpret the term $\sum_{h'=0}^{n'/2}\binom{n'}
{2h'}f_k(2h',0)w^{2h'}\frac{z^n}{n!}$ as a product of the two power series
$e^z$ and $\sum_{n\ge 0}f_k(2n,0)\frac{(wz)^{2n}}{(2n)!}$:
\begin{eqnarray*}
\sum_{\ell\ge 0}\frac{z^\ell}{\ell!}\sum_{n\ge 0}f_k(2n,0)\frac{(wz)^{2n}}
{(2n)!}
& = &
\sum_{n'\ge 0} \sum_{2n+\ell=n'}
\left\{\frac{1}{\ell!}\frac{1}{(2n)!}
f_k(2n,0)w^{2n}\right\}z^{n'}\\
& = & \sum_{n'\ge 0}\left\{\sum_{n=0}^{n'/2}
\binom{n'}{2n}f_k(2n,0)w^{2n}\right\}\frac{z^{n'}}{n'!} \ .
\end{eqnarray*}
We set $\eta_{n'}=\left\{\sum_{h'=0}^{n'/2}\binom{n'}{2h'}f_k(2h',0)
w^{2h'}\right\} $. By assumption we have $\vert z\vert<\rho_k(w)$ and we next
derive, using the Laplace transformation and interchanging integration and
summation
\begin{equation}\label{E:on1}
\sum_{n'\ge 0}(n'+j)!\eta_n
\frac{z^{n'}}{n'!}
=  \int_{0}^{\infty}
\sum_{n'\ge 0}\eta_{n'} \frac{(zt)^{n'}}{n'!} t^je^{-t} dt \ .
\end{equation}
Since $\vert z\vert<\rho_k(w)$ the above transformation is valid and using
\begin{equation}
 \sum_{n'\ge 0}\left\{\sum_{n=0}^{n'/2}
\binom{n'}{2n}f_k(2n,0)w^{2n}\right\}\frac{z^{n'}}{n'!}=
\sum_{\ell\ge 0}\frac{z^\ell}{\ell!}\sum_{n\ge 0}f_k(2n,0)\frac{(wz)^{2n}}
{(2n)!}
\end{equation}
we accordingly obtain
\begin{eqnarray}
\sum_{n'\ge 0}\eta_{n'} \frac{(zt)^{n'}}{n'!} t^je^{-t} dt & = &
 \int_{0}^{\infty}e^{tz}\sum_{n\ge 0}f_k(2n,0)\frac{(wzt)^{2n}}{(2n)!}
t^je^{-t}dt \ .
\end{eqnarray}
The next step is to substitute the term $\sum_{n'\ge 0}(n'+j)!\eta_n
\frac{z^{n'}}{n'!}$ in eq.~(\ref{E:on1}), whence consequently
\begin{eqnarray*}
F_k(z,w) & = &  \sum_{j\ge 0}(-1)^j\frac{(wz)^{2j}}{j!}
 \int_{0}^{\infty}e^{tz}\sum_{n\ge 0}f_k(2n,0)\frac{(wzt)^{2n}}
{(2n)!}t^je^{-t}dt\\
& = &  \int_{0}^{\infty}\sum_{j\ge 0}(-1)^j\frac{(wz)^{2j}}{j!}
e^{tz}\sum_{n\ge 0}f_k(2n,0)\frac{(wzt)^{2n}}
{(2n)!}t^je^{-t}dt \ .
\end{eqnarray*}
The summation over the index $j$ is just an exponential function and we
derive
\begin{eqnarray*}
& = &\int_{0}^{\infty}
e^{-(w^2z^2-z+1)t} \sum_{n\ge 0}f_k(2n,0)\frac{(wzt)^{2n}}
{(2n)!}dt \\
& = &\int_{0}^{\infty}
e^{-(w^2z^2-z+1)t} \sum_{n\ge 0}f_k(2n,0)\frac{1}{(2n)!}
\left(\frac{wz}{w^2z^2-z+1}\right)^{2n} ((w^2z^2-z+1)t)^{2n}dt
\end{eqnarray*}
We proceed by transforming the integral introducing $u=(w^2z^2-z+1)t$,
i.e.~ $dt=(w^2z^2-z+1)^{-1}du$ and accordingly arrive at
\begin{eqnarray*}
F_k(z,w) & = &  \sum_{n\ge 0}f_k(2n,0)\frac{1}{(2n)!}
\left(\frac{wz}{w^2z^2-z+1}\right)^{2n}
\int_{0}^{\infty} e^{-(w^2z^2-z+1)t}  ((w^2z^2-z+1)t)^{2n}
dt \\
& = &  \sum_{n\ge 0}f_k(2n,0)\frac{1}{(2n)!}
\left(\frac{wz}{w^2z^2-z+1}\right)^{2n}
\frac{1}{w^2z^2-z+1} (2n)! \\
& = & \frac{1}{w^2z^2-z+1}  \sum_{n\ge 0}f_k(2n,0)
\left(\frac{wz}{w^2z^2-z+1}\right)^{2n} \ ,
\end{eqnarray*}
whence the lemma. $\square$

{\bf Proof of Theorem~\ref{T:asy2}.} We shall begin by deriving the
asymptotics of $f_2(2n,0)=C_{n}$. Since $\sum_{n\ge
0}\binom{2n}{n}z^n=(1-4z)^{-\frac{1}{2}}$, we observe
$$
C_n =\frac{1}{n+1} [z^n]\, (1-4z)^{-\frac{1}{2}}
$$
and according to Theorem~\ref{T:transfer1} we can express $C_n$
asymptotically as
\begin{equation}\label{E:Catalan}
C_{n}\sim \frac{4^{n}}{\sqrt{\pi n}}
\left(\frac{1}{n+1}-\frac{1}{8n(n+1)}+\frac{1}{128n^{3}}+\frac{5}{1024n^{4}}
+{O}(n^{-5})\right) \ .
\end{equation}
The generating function of the Catalan numbers is given by
\begin{equation}\label{E:catalan}
\Psi(y)=\sum_{n\ge 0} C_n y^n =\frac{1-\sqrt{1-4y}}{2y}
\end{equation}
having a branch-point singularity at $\frac{1}{4}$.
Lemma~\ref{L:ana} allows us to express the analytic continuation of
$\sum_{n\ge 0}{\sf S}_2(n)z^n$ via $\Psi$:
\begin{eqnarray}\label{E:oben}
\Xi_2(z)& = &\frac{1}{z^2-z+1}\,
\Psi(\left(\frac{z}{z^2-z+1}\right)^{2}) \\
 & = & \frac{1}{z^2-z+1}\
\frac{1-\sqrt{1-4
\left(\frac{z}{z^2-z+1}\right)^{2}}}{2\left(\frac{z}{z^2-z+1}\right)^2}
=\frac{1-\sqrt{1-4
\left(\frac{z}{z^2-z+1}\right)^{2}}}{2\frac{z^2}{z^2-z+1}} \ .
\end{eqnarray}
The explicit form of $\Xi_2(z)$ allows us to conclude that
$\rho_2=\frac{3-\sqrt{5}}{2}$ is the unique dominant singularity. We
denote the map $z\mapsto (\frac{z}{z^2-z+1})^2$ by $\vartheta$ and
compute the first terms of the Taylor series at $z=\rho_2$,
i.e.~where $\vartheta(\rho_2)=\frac{1}{16}$:
\begin{equation}
\vartheta(z) = \frac{1}{4}+\frac{5+3\sqrt{5}}{8}(z-\rho_2) +
(z-\rho_2)^2\,T(z) \ ,
\end{equation}
where $T(z)=\sum_{i\ge 0}c_i(z-\rho_2)^i$, $c_i\in\mathbb{R}$.
Analyzing $\Xi_2(z)$ in an intersection of an $\epsilon$-disc
around $\rho_2$ with $\Delta_{\rho_2}$ produces
\begin{equation}\label{E:what}
\Xi_2(z)  =  \frac{1-\sqrt{
\left(\frac{5+3\sqrt{5}}{2}\right)(\rho_2-z) -
(z-\rho_2)^2\,T(z)}}{2\frac{z^2}{z^2-z+1}}
\end{equation}
from which we immediately conclude
\begin{equation}
\Xi_2(\rho_2 z)=O(\Psi(4z)) \ .
\end{equation}
Theorem \ref{T:transfer1} and the scaling property of Taylor coefficients
$[z^n]f(z)=\gamma^n [z^n]f(\frac{z}{\gamma})$ imply
\begin{equation}\label{E:coeff}
K [z^n]\, \Xi_2({\rho_2}z) \sim [z^n]\, \Psi({4}z), \quad
\text{\rm for some} \ K>0
\end{equation}
and we accordingly arrive substituting $\alpha=-\frac{1}{2}$ at
\begin{eqnarray*}
\quad [z^n]\, \Xi_2(z)  =
\frac{K}{\sqrt{n}}
\left(\frac{1}{n+1}-\frac{1}{8n(n+1)}+\frac{1}{128n^{3}}
+\frac{5}{1024n^{4}}
+{O}(n^{-5})\right)\, \left(\frac{3+\sqrt{5}}{2}\right)^{n},
\end{eqnarray*}
for some $K>0$. Via Theorem~\ref{T:cool1}, the coefficients ${\sf
S}_2(n)$ are explicitly known and we compute $K=1.9572$ from which the
theorem follows. $\square$.

%%%
%%%
%%%%%%%%%%%%%%%%%%%%%%%%%%%%%%%%%%%%%%%%%%%%%%%%%%%%%%%%%%%%%%%%%%%%%%%%%%
%%%
{\bf Acknowledgments.}
%%%
%%%%%%%%%%%%%%%%%%%%%%%%%%%%%%%%%%%%%%%%%%%%%%%%%%%%%%%%%%%%%%%%%%%%%%%%%%
%%%
We are grateful to Prof.~Jason Z.~Gao for helpful comments. Many thanks
to J.Z.M.~Gao
for helping to draw the figures. This work was supported by the 973 Project,
the PCSIRT Project of the Ministry of Education, the Ministry of Science and
Technology, and the National Science Foundation of China.

\bibliography{a}
\bibliographystyle{plain}

%%%
%%%%%%%%%%%%%%%%%%%%%%%%%%%%%%%%%%%%%%%%%%%%%%%%%%%%%%%%%%%%%%%%%%%%%%%%%%
%%%

\end{document}